\newcommand\vldbdoi{XX.XX/XXX.XX}
\newcommand\vldbpages{XXX-XXX}
\newcommand\vldbvolume{16}
\newcommand\vldbissue{2}
\newcommand\vldbyear{2022}
\newcommand\vldbauthors{\authors}
\newcommand\vldbtitle{\shorttitle} 
\newcommand\vldbavailabilityurl{https://github.com/Texera/Fries-Flink}
\newcommand\vldbpagestyle{plain}
\newcommand{\boldstart}[1]{\vspace{0.3em} \noindent \textbf{#1}}
\newcommand{\sysname}{{\sf Fries}\xspace}
\newcommand{\epoch}{{\sf Epoch}\xspace}
\newif\ifshort
\begin{document}

\title{Fries: Fast and Consistent Runtime Reconfiguration in Dataflow Systems with Transactional Guarantees (Extended Version)}

\author{Zuozhi Wang, Shengquan Ni, Avinash Kumar, Chen Li}
\affiliation{%
  \institution{UC Irvine}
  \country{United States}
}
\email{{zuozhiw, shengqun, avinask1, chenli}@ics.uci.edu}

\begin{abstract}
A computing job in a big data system can take a long time to run, especially for pipelined executions on data streams. Developers often need to change the computing logic of the job such as fixing a loophole in an operator or changing the machine learning model in an operator with a cheaper model to handle a sudden increase of the data-ingestion rate.  Recently many systems have started supporting  runtime reconfigurations to allow this type of change on the fly without killing and restarting the execution. While the delay in reconfiguration is critical to performance, existing systems use epochs to do runtime reconfigurations, which can cause a long delay. In this paper we develop a new technique called \sysname that leverages the emerging availability of fast control messages in many systems, since these messages can be sent without being blocked by data messages.  We formally define consistency in runtime reconfigurations, and develop a \sysname scheduler with consistency guarantees. The technique not only works for different classes of dataflows, but also works for parallel executions and supports fault tolerance. Our extensive experimental evaluation on clusters show the advantages of this technique compared to epoch-based schedulers.
\end{abstract}

\maketitle

\pagestyle{\vldbpagestyle}
\begingroup\small\noindent\raggedright\textbf{PVLDB Reference Format:}\\
\vldbauthors. \vldbtitle. PVLDB, \vldbvolume(\vldbissue): \vldbpages, \vldbyear.\\
\href{https://doi.org/\vldbdoi}{doi:\vldbdoi}
\endgroup
\begingroup
\renewcommand\thefootnote{}\footnote{\noindent
This work is licensed under the Creative Commons BY-NC-ND 4.0 International License. Visit \url{https://creativecommons.org/licenses/by-nc-nd/4.0/} to view a copy of this license. For any use beyond those covered by this license, obtain permission by emailing \href{mailto:info@vldb.org}{info@vldb.org}. Copyright is held by the owner/author(s). Publication rights licensed to the VLDB Endowment. \\
\raggedright Proceedings of the VLDB Endowment, Vol. \vldbvolume, No. \vldbissue\ %
ISSN 2150-8097. \\
\href{https://doi.org/\vldbdoi}{doi:\vldbdoi} \\
}\addtocounter{footnote}{-1}\endgroup

\ifdefempty{\vldbavailabilityurl}{}{
\vspace{.3cm}
\begingroup\small\noindent\raggedright\textbf{PVLDB Artifact Availability:}\\
The source code, data, and/or other artifacts have been made available at \url{\vldbavailabilityurl}.
\endgroup
}

\definecolor{dkgreen}{rgb}{0,0.6,0}
\definecolor{gray}{rgb}{0.5,0.5,0.5}
\definecolor{mauve}{rgb}{0.58,0,0.82}

\lstdefinestyle{myStyle}{
  frame=tb,
  language=java,
  morekeywords={def, break, in, val, var},
  aboveskip=3mm,
  belowskip=3mm,
  showstringspaces=false,
  columns=flexible,
  basicstyle={\small\ttfamily},
  numbers=none,
  numberstyle=\tiny\color{gray},
  keywordstyle=\color{blue},
  commentstyle=\color{dkgreen},
  stringstyle=\color{mauve},
  frame=single,
  breaklines=true,
  breakatwhitespace=true,
  tabsize=2,
}

\section{Introduction}
\label{sec:introduction}

Big data systems are widely used to process large amounts of data. Each computation job in these systems can take a long time to run, from hours to days or even weeks to finish. 
Applications that require timely processing of input data often use pipelined dataflow execution engines~\cite{ journals/debu/CarboneKEMHT15, journals/pvldb/ChandramouliGBDPTW14, conf/sigmod/ToshniwalTSRPKJGFDBMR14, journals/pvldb/AkidauBCCFLMMPS15}, for example, in the scenarios of processing real-time streaming data, or answering queries progressively to provide early results to users. 
In these applications, when a long running job continuously processes ingested data, developers often need to change the computing logic of the job without disrupting the execution, as illustrated in the following example.  

Consider a data-processing pipeline for payment-fraud detection shown in Figure~\ref{fig:fd-reconfig-multi-op}. This simplified dataflow resembles many real-world applications~\cite{StreamINGFraudDetection,FlinkFraudDetectionDemo}. A stream of payment tuples is continuously ingested into the dataflow, with each tuple containing payment information such as customer, merchant, and amount. The dataflow uses two machine learning (ML) operators $FC$ and $FM$ to detect fraud based on customer and merchant information.  

\begin{figure}[htbp]
	\includegraphics[width=0.9\linewidth]{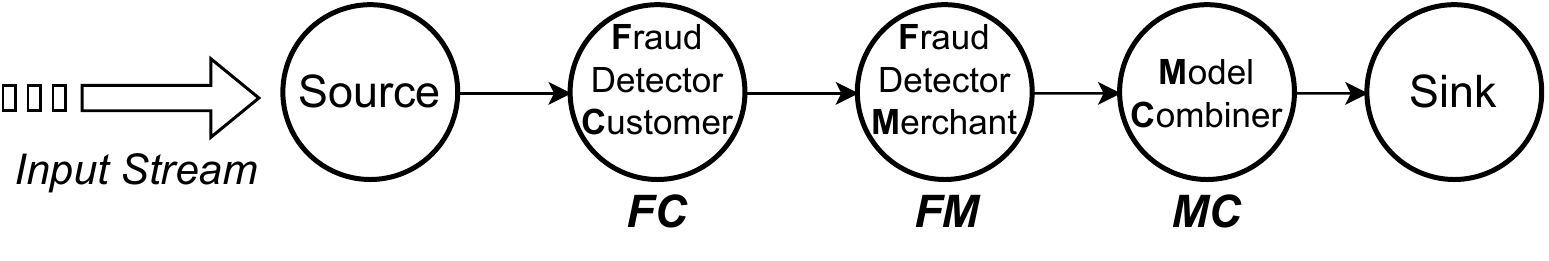}
	\caption{\label{fig:fd-reconfig-multi-op}
		\textbf{An example data-processing pipeline for fraud detection processing continuously ingested data. }
	}
\end{figure}

Consider two example use cases in this dataflow. {\em Use case 1: fixing loopholes in operators.} After observing unexpected tuples from the {\sf Sink} operator, the user identifies a loophole in the operator $FM$. She wants to update this operator to incorporate new rules to fix the loophole, without stopping the execution.  {\em Use case 2: handling surges of data arrival rate.} Suppose the data arrival rate at the source suddenly increases, and as a result, the end-to-end processing latency becomes larger. The user finds that the ML operator $FM$ is the bottleneck. To reduce the latency, she wants to ``hot-replace'' the expensive ML model (e.g., a deep neural network) with a lightweight model (e.g., a decision tree) to improve its performance, thus reduce the processing latency.  Again, she wants to make the change without stopping the execution.  These examples show the importance of allowing developers to change the dataflow execution ``on the fly.'' We call such changes {\em runtime reconfigurations}. This problem has gained a lot of interest in the research areas of software engineering~\cite{journals/tosem/SadeghiEM17}, mobile computing~\cite{journals/eis/KakousisPP10, conf/wmcsa/TiwariRMGL19}, and distributed systems~\cite{conf/coots/KonC99, conf/sigsoft/MaBGML11}. Recently, users of dataflow systems also show the need for runtime reconfigurations~\cite{StreamINGFraudDetection, FlinkUpdateLogicVol2, FlinkUpdateCepPattern} and  more systems start supporting this important feature~\cite{conf/sigmod/CarboneFKK20}, such as Amber~\cite{journals/pvldb/KumarWNL20}, Chi~\cite{journals/pvldb/MaiZPXSVCKMKDR18}, Flink~\cite{UpgradeFlinkApplications}, and Trisk~\cite{conf/cloud/MaoHTWM21}. 

Naturally there is a delay from the time a user requests a reconfiguration to the time its changes take effect in the target operators.  This delay is critical to the performance of the system. For example, in use case 1, the user wants to fix the loophole as soon as possible since a large reconfiguration delay can cause financial losses. In use case 2, a large delay in mitigating the surge can cause the system to suffer longer in terms of long latency and wasting of computing resources. Thus we want this delay to be as low as possible.

A main limitation of existing systems supporting runtime reconfigurations is that they could have a long reconfiguration delay. In these systems, after a reconfiguration request is submitted, they need to wait for all the in-flight tuples to be processed by those target reconfiguration operators, as well as those earlier operators in the dataflow, before the requested changes can be applied on the target operators. This delay could be very long, when there are many in-flight tuples, or some of these operators are expensive, especially for operators using advanced machine learning models and those implemented as user-defined functions (UDF's).

In this paper, we develop a novel technique, called ``\sysname,'' to perform runtime reconfigurations with a low delay.
It leverages the emerging availability of fast control messages in many systems recently. A {\em fast control message}, ``FCM'' for short, is a message exchanged between the controller in the data engine and an operator without being blocked by data messages.  Figure~\ref{fig:fd-reconfig-multi-op-dcm} shows an example of handling a reconfiguration request of two operators $FM$ and $MC$ using FCM's. Upon a reconfiguration request, the controller sends an FCM to each of the two operators, and each of them applies the new configuration immediately after receiving the message.  Since FCM's are sent separately from data messages, these changes can reach the target operators much faster. 

\begin{figure}[htbp]
	\includegraphics[width=0.85\linewidth]{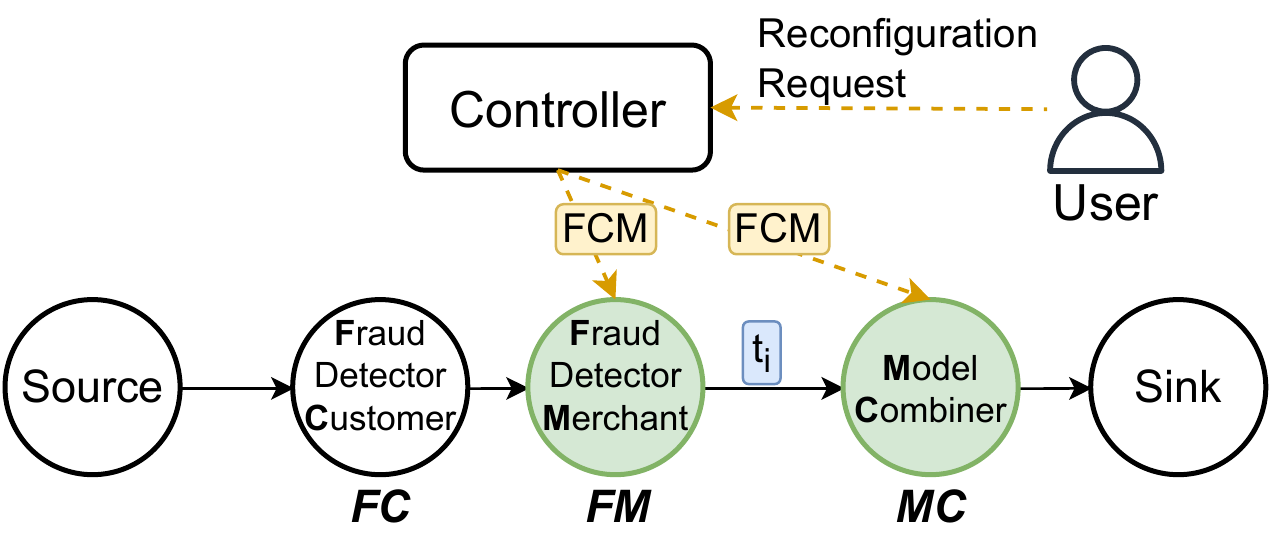}
	\caption{\label{fig:fd-reconfig-multi-op-dcm}
		\textbf{Handing a runtime reconfiguration of operators $FM$ and $MC$ using fast control messages (FCM's).}
	}
\end{figure}

We will show in Section~\ref{subsec:naive-fcm-scheduler} that the naive way of using FCM's  can cause consistency issues in Figure~\ref{fig:fd-reconfig-multi-op-dcm}.  It has unexpected side effects, e.g., producing incorrect results on the output tuples, or even causing the operator $MC$ to crash.  This example shows several challenges in developing \sysname: 1) What is the meaning of ``consistency'' in this reconfiguration context? 2) How to ensure this consistency  while reducing their delay?  3) How to deal with different types of operators and support parallel executions?  We study these challenges and make the following contributions.

\begin{itemize}
    \item We analyze epoch-based reconfiguration schedulers and show their limitations (Section~\ref{sec:epoch-reconfiguration}). 
    \item We formally define consistency of a reconfiguration based on transactions  (Section~\ref{sec:dcm-reconfiguration}).
    \item We first consider a simple class of dataflows that have one-to-one operators only, and develop a \sysname scheduler that guarantees consistency (Section~\ref{sec:hybrid-single-tuple}).
    \item We then consider the general class of dataflows with one-to-many operators, and extend the \sysname scheduler (Section~\ref{sec:complex-extension}). 
    \item We extend \sysname to more general cases, such as dataflows with blocking operators and multiple workers. We also discuss how to support fault-tolerance (Section~\ref{sec:extensions}).
    \item We conduct an extensive experimental study to evaluate \sysname in various scenarios and show its superiority compared to epoch-based schedulers (Section~\ref{sec:experiments}).
\end{itemize}

\subsection{Related Work}

\boldstart{Reconfiguration systems.} 
Recently, many data-processing systems have started to support reconfigurations. 
Flink~\cite{journals/debu/CarboneKEMHT15} supports reconfiguration by taking a savepoint~\cite{FlinkSavepoint}, killing the running job, then restarting the job with the new configuration. This approach is disruptive to the dataflow execution. 
Spark Streaming~\cite{conf/sigmod/ArmbrustDTYZX0S18, conf/sosp/ZahariaDLHSS13} uses a mini-batch-based execution strategy and supports reconfiguration between mini-batches. 
Chi~\cite{journals/pvldb/MaiZPXSVCKMKDR18} enables runtime reconfiguration by propagating epoch markers in its data stream. 
Trisk~\cite{conf/cloud/MaoHTWM21} provides an easy-to-use programming API for reconfigurations. 
The approaches in these systems are all based on epochs, which can have a long reconfiguration delay, as analyzed in Section~\ref{sec:epoch-reconfiguration}. \sysname relies on FCM's to perform reconfigurations with a low delay. 
Noria~\cite{conf/osdi/GjengsetSBAEKKM18} is a system that uses dataflows to incrementally maintain materialized views. The system supports reconfigurations of view definitions, which require the new views to be recomputed from entire base tables. In \sysname, an update of a dataflow only affects the future tuples. The input tuples that are already processed by the dataflow are not recomputed using the new configuration.

\boldstart{Re-scaling systems.} Some systems~\cite{conf/sigmod/FernandezMKP13,journals/pvldb/HoffmannLMKLR19,conf/sigmod/MonteZRM20} support updating the dataflow for re-scaling. For example, Megaphone~\cite{journals/pvldb/HoffmannLMKLR19} based on timely dataflow~\cite{conf/sosp/MurrayMIIBA13} supports a fine-granularity re-scaling and Rhino~\cite{conf/sigmod/MonteZRM20} based on Flink supports re-scaling with very large states. \sysname focuses on reconfiguring the computation functions of operators, which is different from re-scaling.

\boldstart{Transactions in dataflow systems.}
S-Store~\cite{journals/pvldb/MeehanTZACDKMMP15} and the work in~\cite{conf/edbt/BotanFKT12} are systems that allow streaming dataflows and OLTP workloads to access a shared mutable state.
Although both systems do not support reconfigurations directly, we could map a reconfiguration to these systems. S-Store defines transactions on the processing of each input batch on a single operator. This model cannot express our consistency requirements in reconfigurations.  The work in~\cite{conf/edbt/BotanFKT12} treats a dataflow as a black box, thus it has the limitation of not being able to utilize the properties of the dataflow and its operators to reduce the reconfiguration delay. \sysname can do so to achieve this reduction. Both earlier systems include a transaction scheduler to manage the processing of data, which creates scheduling overhead even when there is no reconfiguration.  The \sysname scheduler has no such overhead before receiving a reconfiguration request.
Additionally, both earlier systems are only on a single node, while the \sysname scheduler can run on a distributed engine on a cluster.

\boldstart{Transactions in database systems.}
Transactions are widely studied in traditional database systems (e.g., \cite{books/aw/BernsteinHG87,books/mk/WeikumV2002,books/mk/BernsteinN96}). 
A uniqueness in transactions in our work is that they treat operations in a reconfiguration as a separate transaction, which is handled differently from data transactions.  In addition, \sysname does optimizations by utilizing special properties in our problem setting, including the DAG shape of a dataflow, and types of operators, e.g., one-to-one and one-to-many.
Moreover, the \sysname scheduler uses FCM's and epoch markers to schedule transactions without  locking.

\section{Problem Settings}
\label{sec:preliminaries}

\subsection{Data-Processing Model}
\label{subsec:data-processing-model}

A data-processing system runs a computation dataflow job represented as a directed acyclic graph (DAG) of operators. Each operator receives tuples from its input edges, processes them, and sends tuples through its output edges. An operator contains a computation function $f$ represented as
$$f: (s, t) \rightarrow \big(s', \{(t'_1, o'_1), \ldots, (t'_n, o'_n)\}\big) .$$ 
The function processes a tuple $t$ at a time with a state $s$ of the operator, produces a set of zero or more output tuples $\{t'_1, \ldots, t'_n\}$, where each tuple $t'_i$ has a receiving operator $o'_i$. The operator also updates its state to $s'$.  
The system has a module called {\em controller} that manages the execution of the job, handles requests from the user, and exchanges messages with operators during the execution.

For simplicity, we first focus on dataflows under the following assumptions. (1) A dataflow contains pipelined operators only, such as selection, projection, union, and other tuple-at-a-time operators. We consider a class of join operators where the operator first collects all the tuples from one input (e.g., the ``build'' input of a hash join), then starts processing tuples from the other input (e.g., the ``probe'' phase of a hash join). We consider the processing of tuples from the second input of join. (2) Each operator has a single worker. We relax these assumptions in Section~\ref{sec:extensions}.

As an example, consider a data-processing pipeline for payment-fraud detection shown in Figure~\ref{fig:fd-reconfig-multi-op}.
The example dataflow uses two machine learning (ML) operators for fraud detection. The first one, denoted as $FC$, keeps a state of the 5 recent tuples of each customer. For each input tuple, $FC$ updates the state and feeds the 5 recent tuples of the customer into an ML model. The predicted probability $p_c(5)$ is attached as a new column of the tuple. The second one, denoted as $FM$, keeps a state of the 5 recent tuples of each merchant. Similarly, it uses an ML model to generate a predicted probability $p_m(5)$, and attaches it as a new column of the tuple. Finally, the model combiner $MC$ uses $p_c(5)$ and $p_m(5)$ of each tuple to compute the final average probability with the weights $[0.4, 0.6]$.

\subsection{Runtime Reconfiguration}
\label{subsec:runtime-reconfiguration}

\begin{definition}[Runtime reconfiguration]
\label{def:runtime-reconfiguration}
During the execution of a dataflow, an update to the computation functions of its operators is a {\em runtime reconfiguration} of this execution. 
\end{definition}

Formally, a reconfiguration $\mathcal{R}$ is a set of operators with a function update $\mu(o_i)$ for each operator $o_i$, i.e.,
$$ \mathcal{R} = \{ (o_1,\mu(o_1)),\ldots, (o_n,\mu(o_n)) \}. $$
Each operator $o_i$ has a {\em function-update operation} $\mu(o_i)$.  This operation applies a pair $\left<f_{o_i}', \mathcal{T}_{o_i}\right>$ to the operator, where $f_{o_i}'$ is a new computation function of the operator. $\mathcal{T}_{o_i}$ is a state transformation that converts the operator's original state $s$ to a new state $s^* = \mathcal{T}_{o_i}(s)$, which can be consumed by $f_{o_i}'$. In this paper, we consider the case where there is one reconfiguration at a time. 

In the running example, suppose the user identifies a flaw in the dataflow and wants to reconfigure the two operators $FM$ and $MC$. Specifically, the user wants to change $FM$ to output an additional probability value $p_m(10)$, which is predicted using the $10$ recent tuples of each merchant. The operator $MC$ needs to be updated to combine all three probabilities ($p_c(5)$, $p_m(10)$, and $p_m(5)$) with the new weights $[0.4, 0.4, 0.2]$. Table~\ref{table:fd-reconfig-multi-op-summary} shows the old and new configurations of the two reconfiguration operators.

\begin{table}[htbp]
\centering
\resizebox{2.5in}{!}{%
\begin{tabular}{|c|c|c|c|} 
\hline
    & $FM$'s output & $MC$ weights  \\ 
\hline
Old configuration & $p_m(5)$                       & {[}0.4, 0.6]                  \\ 
\hline
New configuration & $p_m(10)$, $p_m(5)$             & {[}0.4, 0.4, 0.2]        \\
\hline
\end{tabular}
}
\caption{Operator executions during a reconfiguration.}
\label{table:fd-reconfig-multi-op-summary}
\end{table}

Note that the new configuration of an operator can require a state different from that of the old configuration. In this case, the reconfiguration can use a state transformation to migrate the old state to the new one.
In the running example, the old configuration of operator $FM$ uses a state with the last $5$ payment tuples for each merchant. However, the new configuration of $FM$ needs a list of last $10$ tuples for each merchant. The user provides a state transformation $\mathcal{T}$ for operator $FM$, to instruct the system in transferring operator $FM$'s old state to the new one. In this example, the user chooses to fill the new state with the $5$ tuples from the old state and $5$ additional $null$ values.

\section{Epoch-Based Reconfiguration Schedulers and Limitations}
\label{sec:epoch-reconfiguration}

In this section, we explain epoch-based reconfiguration schedulers and show their limitation of long delays.

\subsection{Epoch-Based Schedulers}

\boldstart{Dataflow epoch.} A stream of tuples processed by the system can be divided into consecutive sets of tuples, where each set is called an {\em epoch}~\cite{journals/pvldb/CarboneEFHRT17}.  
One way to create epochs is to use epoch markers. At the start of a new epoch, an epoch marker is injected to each source operator. 
The epoch marker is then propagated along the data stream using the following protocol~\cite{journals/pvldb/CarboneEFHRT17}. 
When an operator receives an epoch marker from an input channel, it performs epoch alignment by waiting for all its inputs to receive an epoch marker, then sends the marker downstream. 
As an example, Figure~\ref{fig:epoch-fraud-propagate} shows two epochs during the execution of the fraud-detection dataflow. An epoch marker injected between $t_4$ and $t_5$ divides the input stream into two epochs. The epoch marker indicates the end of epoch 1 and the start of epoch 2.

\begin{figure}[htbp]
	\includegraphics[width=0.8\linewidth]{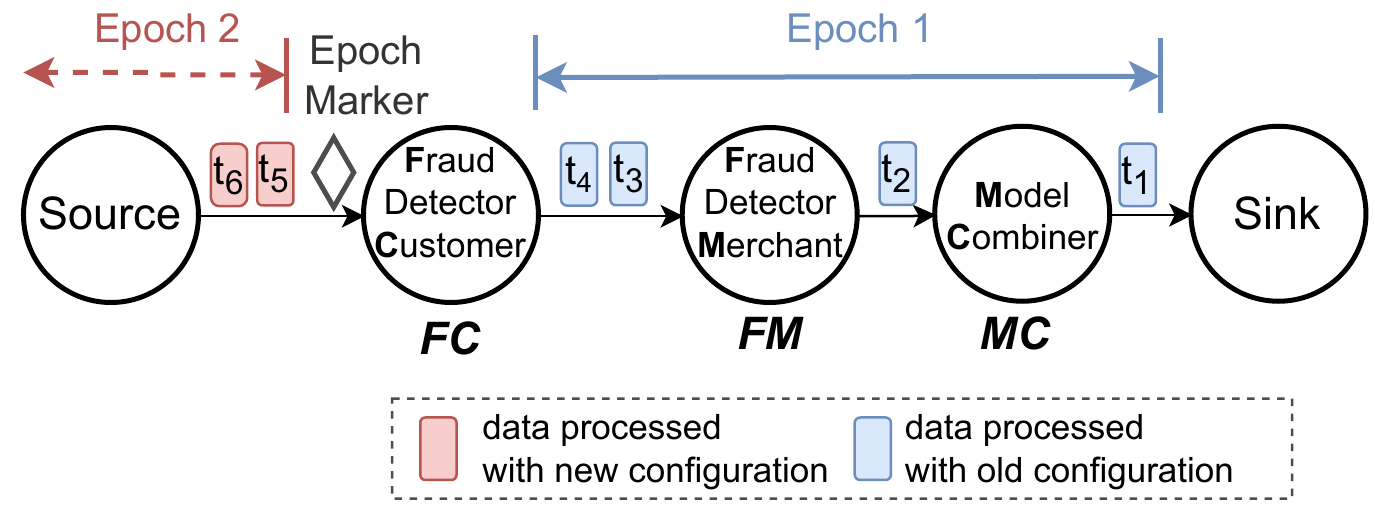}
	\caption{\label{fig:epoch-fraud-propagate}
		\textbf{An epoch-based reconfiguration scheduler in Chi~\cite{journals/pvldb/MaiZPXSVCKMKDR18}. It uses an epoch barrier to apply the new configuration to operators $FM$ and $MC$ at the start of Epoch 2.}
	}
\end{figure}

\begin{definition}[Epoch-based Scheduler]
An {\em epoch-based scheduler} schedules a reconfiguration request between two epochs. That is, for each reconfiguration operator $O$, all the tuples in the old epoch are processed with the old configuration of $O$, and all the tuples in the new epoch are processed with the new configuration of $O$.
\end{definition}

Considering the aforementioned method to generate epochs, the following is an implementation adopted by Chi~\cite{journals/pvldb/MaiZPXSVCKMKDR18}.  We call this implementation ``Epoch Barrier Reconfiguration'' scheduler, or ``EBR'' in short. 
Upon a reconfiguration request, 
the controller starts a new epoch and piggybacks the reconfiguration in the epoch marker.
When a reconfiguration operator receives epoch markers from all its inputs, it applies the new configuration.
The operator then processes the input tuples in the next epoch using the new configuration. Figure~\ref{fig:epoch-fraud-propagate} shows the process of handling a reconfiguration of operators $FM$ and $MC$ using the EBR scheduler.
When operator $FM$ receives the epoch marker, it applies the new configuration, and propagates the marker to operator $MC$. When operator $MC$ receives the epoch marker, it also applies the new configuration. 

\begin{table}[htbp]
\centering
\resizebox{\linewidth}{!}{%
\begin{tabular}{|>{\hspace{0pt}}m{0.300\linewidth}|>{\hspace{0pt}}m{0.270\linewidth}|>{\hspace{0pt}}m{0.430\linewidth}|} \hline
\textbf{System}                                                                                                                           & \textcolor[rgb]{0.2,0.2,0.2}{\textbf{Epoch creation}} & \textcolor[rgb]{0.2,0.2,0.2}{\textbf{Reconfiguration Strategy}}                                                             \\ \hline
\textcolor[rgb]{0.2,0.2,0.2}{Chi}~\cite{journals/pvldb/MaiZPXSVCKMKDR18} & \textcolor[rgb]{0.2,0.2,0.2}{Epoch makers}            & \textcolor[rgb]{0.2,0.2,0.2}{Piggybacking control }\textcolor[rgb]{0.2,0.2,0.2}{messages in epoch markers}        \\ \hline
\textcolor[rgb]{0.2,0.2,0.2}{Flink}~\cite{FlinkSavepoint}                                                                                  & \textcolor[rgb]{0.2,0.2,0.2}{Epoch makers}            & \textcolor[rgb]{0.2,0.2,0.2}{Stop-and-restart}                                                                     \\ \hline
\textcolor[rgb]{0.2,0.2,0.2}{Spark Streaming}~\cite{conf/sigmod/ArmbrustDTYZX0S18}                                                         & \textcolor[rgb]{0.2,0.2,0.2}{Mini-batch}              & \textcolor[rgb]{0.2,0.2,0.2}{Stop-and-restart}                                                                     \\ \hline
\end{tabular}
}
\caption{Epoch-based reconfiguration schedulers.}
\label{table:epoch-implementations}
\end{table}

Table~\ref{table:epoch-implementations} shows different epoch-based reconfiguration schedulers. %

In Flink~\cite{FlinkSavepoint}, upon a reconfiguration, a new epoch is immediately triggered using an epoch marker. At the end of the old epoch, each operator saves its state into a checkpoint. After the old epoch is processed by all the operators, Flink kills the execution, updates the dataflow graph, loads the saved states, and restarts the dataflow. In Spark Streaming~\cite{conf/sigmod/ArmbrustDTYZX0S18}, epochs are created by dividing the input data stream into small mini-batches, each of which is an epoch. A mini-batch is processed one at a time by launching a separate computation job. Upon a reconfiguration, the system modifies the dataflow graph before starting the job of the next mini-batch.

\subsection{Limitations: Long Reconfiguration Delays}
\label{subsec:comparison-latency}

A major limitation of epoch-based reconfiguration schedulers is a long reconfiguration delay, which is from the time a request is submitted to the time the new configuration takes effect in the target operators. In particular, the system needs to process all the in-flight tuples before the new epoch. 
Take the EBR scheduler in Figure~\ref{fig:epoch-fraud-propagate} as an example. Operator $FM$ needs to finish processing the in-flight tuples $t_3$ and $t_4$.
In general, this delay could be long due to the following reasons. First, the dataflow can contain multiple expensive operators that make the processing of an epoch slow. 
Second, the number of in-flight tuples could be large, especially when the system is under high workload. 
We may want to reduce the number of in-flight tuples by decreasing the buffer size.
However, a smaller buffer can be easily filled by a minor fluctuation in the input ingestion rate.
When the buffer is full, the system triggers back-pressure, which can decrease the throughput. 
Moreover, a small buffer size causes the networking layer to transmit data in small batches, which introduces additional transmission overhead. 
Compared to the EBR approach, the Flink approach suffers from an additional delay of stopping and restarting the dataflow. Spark Streaming can also have a long reconfiguration delay. The delay is determined by the processing time of a mini-batch, with a predefined interval usually set to a few seconds. However, the delay can be higher when the processing speed cannot keep up with a surge of the data ingestion rate.

\section{Scheduling Reconfigurations Using Fast Control Messages}
\label{sec:dcm-reconfiguration}

In this section, we introduce a new type of reconfiguration schedulers based on fast control messages (FCM's). We present a naive scheduler and show its issues. We then formally define consistency of a reconfiguration.

\begin{definition}[Fast Control Message]
A {\em fast control message}, ``FCM'' for short, is a message exchanged between the controller and an operator without being blocked by data messages.
\end{definition}

There are many ways to implement fast control messages. For instance, to send an FCM from the controller to the fraud detector in our running example, one approach is to set up a new communication channel between the controller and the fraud detector. The channel is separate from existing data channels, and the FCM can bypass data messages. Another way is to transmit the FCM using existing data channels, but assigning a higher priority to the FCM. The FCM is first sent to a source operator of the workflow, then propagated along the edges to the fraud detector, and it bypasses data messages in each data channel.

\subsection{FCM-based Schedulers}
\label{subsec:naive-fcm-scheduler}

\boldstart{Naive FCM scheduler.} A main benefit of using FCM's to schedule reconfigurations compared to epoch-based schedulers is that FCM's have a much smaller delay.  A naive scheduler leverages this benefit as follows. The controller sends an FCM directly to each reconfiguration operator. When an operator receives an FCM, it applies the new configuration immediately after finishing the processing of its current tuple.
We use Figure~\ref{fig:fd-reconfig-multi-op-dcm} to explain how the naive scheduler works in a reconfiguration of two operators $FM$ and $MC$.  Using this scheduler, the controller sends an FCM directly to each of the two operators $FM$ and $MC$. The FCM carries the new function $f'$ and the state transformation $\mathcal{T}$ of the corresponding operator. These operators update their configuration after receiving their FCM.

While this naive scheduler has a low reconfiguration delay, it could generate an undesirable reconfiguration schedule in this example.  Notice that the scheduler does not coordinate the updates to these two operators that run independently. Consider the in-flight tuple $t_i$, which is processed by $FM$ using its old configuration. Suppose the $MC$ switches to the new configuration before the arrival of $t_i$. Then tuple $t_i$ is processed by $MC$ using its new configuration. The tuple contains two probability values $p_c(5)$ and $p_m(5)$, but the new configuration of $MC$ expects three probability values.  This schema mismatch could have unexpected side effects, such as an incorrect result on the produced output tuple, or even causing the operator $MC$ to crash.  This example shows the importance for the reconfiguration to be performed in a synchronized manner on the two reconfiguration operators.  In particular, we want a tuple to be processed by the two operators either using the old configuration or using the new configuration.

\boldstart{FCM multi-version scheduler.}
To ensure a tuple is processed by the same configuration of multiple operators, 
we can use the following FCM-based multi-version scheduler that maintains multiple configurations of an operator at the same time. The controller first sends an FCM to each reconfiguration operator. Each operator keeps both the old configuration and the new one. After all operators have received the FCM, each source operator increments its version number, which is tagged to each source tuple.
For each input tuple, an operator checks the tuple's tagged version number, chooses the corresponding configuration version to process the tuple, and tags the same version number to the output tuples. As an example, in Figure~\ref{fig:fcm-multi-version}, after the new configuration is sent to operator $E$, the source operators then tag subsequent output tuples $t_3$ and $t_4$ with the new version $v_2$.

\begin{figure}[htbp]
	\includegraphics[width=2.0in]{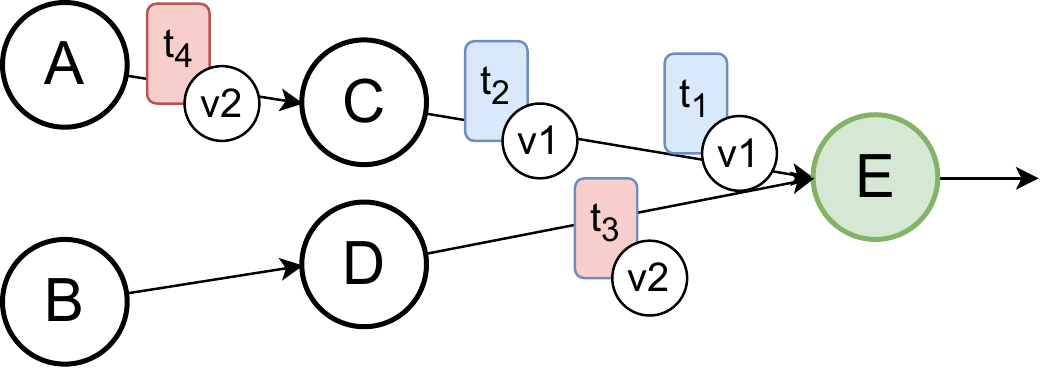}
	\caption{\label{fig:fcm-multi-version}
		\textbf{Using an FCM multi-version scheduler, an operator processes a tuple based on its version tag.}
	}
\end{figure}

This scheduler has two problems. First, each reconfigured operator may need to keep two sets of states for two configurations, and these states could be very large (e.g., large hash tables or machine learning models). Second, this scheduler still suffers from a possible high reconfiguration delay. In particular, similar to the case of the EBR scheduler, there can be a large amount of in-flight tuples that are already tagged with the old version and they still need to be processed with the old configuration (e.g., $t_1$ and $t_2$ in Figure~\ref{fig:fcm-multi-version}).

\subsection{Reconfiguration Consistency}
\label{subsec:reconfig-consistency}

We formally define the consistency requirements in this context.  At a high level, we treat the processing of a single source tuple by multiple operators as one {\em transaction}, and a reconfiguration as another transaction. We use conflict-serializability to define the consistency of a schedule of a reconfiguration.

\begin{definition}[Scope of a source tuple]
\label{def:tuple-scope}
The {\em scope} of a source tuple $t$ of a dataflow $W$, denoted as $\mathcal{S}(W, t)$, is a pair $(\mathcal{S}, \preccurlyeq_\mathcal{S})$, where $\mathcal{S}$ is a set of tuples and $\preccurlyeq_\mathcal{S}$ is a partial order on $\mathcal{S}$, defined as follows: 
\begin{enumerate}
    \item The source tuple is in $\mathcal{S}$.
    \item For each tuple $s$ in $\mathcal{S}$, if an operator processes the tuple $s$ and produces zero or more output tuples $\{s'_1, \ldots, s'_n\}$, all the produced tuples are also in $\mathcal{S}$. 
    For each tuple $s'_i$, we have the order $s \prec s'_i$ in $\preccurlyeq_\mathcal{S}$.
\end{enumerate}
\end{definition}

For instance, in Figure~\ref{fig:scope-example}, a source tuple $t$ is ingested into the dataflow from the source operator $A$ and processed by operators $C$, $D$, $E$, $F$, and $H$. The scope of $t$  includes the tuples on the highlighted edges and their partial order defined as their edges on the DAG. 

\begin{figure}[htbp]
	\includegraphics[width=2.0in]{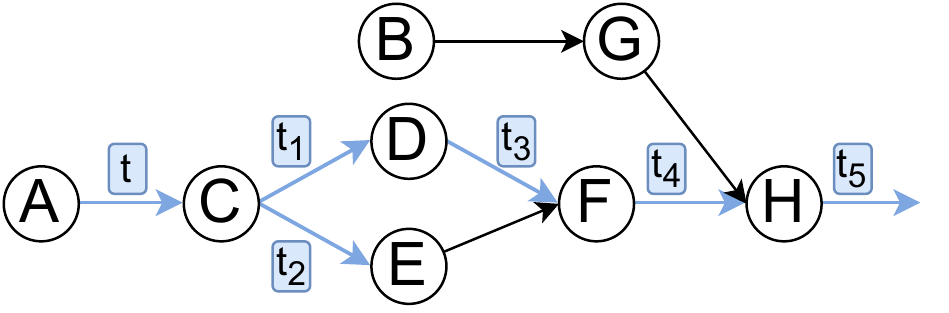}
	\caption{\label{fig:scope-example}
		\textbf{Scope of a source tuple in a dataflow.}
	}
\end{figure}

\begin{definition}[Data operation]
The {\em data operation} of a tuple $s$ is the processing of $s$ by its receiving operator $o$, denoted as $\phi(s, o)$.
\end{definition}

\begin{definition}[Data transaction]
\label{def:data-transaction}
For a dataflow $W$ and a source tuple $t$ in $W$, let $(\mathcal{S}, \preccurlyeq_\mathcal{S})$ be the scope of $t$. 
The {\em data transaction} of $t$ is a pair $(\Phi, \preccurlyeq_{\Phi})$, where $\Phi$ is the set of data operations of the tuples in $\mathcal{S}$, and $\preccurlyeq_{\Phi}$ is a partial order on $\Phi$. 
For two data operations $\phi(t_i, o_i)$ and $\phi(t_j, o_j)$ in $\Phi$, we have $\phi(t_i, o_i) \prec \phi(t_j, o_j)$ in $\preccurlyeq_{\Phi}$ if and only if $t_i \prec t_j$ is in $\preccurlyeq_\mathcal{S}$.
\end{definition}

For instance, in Figure~\ref{fig:fd-reconfig-multi-op-dcm}, tuple $t$ has the following data transaction $T_1$: 
$$T_1: [\phi(t, FC), \phi(t, FM), \phi(t, MC)].$$ 
In the data transaction, ``$\phi(t, FC)$'' is a data operation representing the processing of this tuple $t$ by the $FC$ operator.

\begin{definition}[Function-update transaction]
\label{def:function-update-transaction}
The {\em function-update transaction} of a reconfiguration $\mathcal{R} = \{ (o_1,\mu(o_1)),\ldots, (o_n,\mu(o_n)) \}$ on a dataflow $W$ is the set $\{\mu(o_1),\ldots,\mu(o_n)\}$, where each $\mu(o_i)$ is a function-update operation in $\mathcal{R}$.
\end{definition}

For instance, the reconfiguration in Figure~\ref{fig:fd-reconfig-multi-op-dcm} has the following function-update transaction $T_2$: 
$$T_2: \{\mu(FM), \mu(MC)\}.$$
In the function-update transaction, ``$\mu(FM)$'' is a function-update operation representing that the operator $FM$ switches to the new configuration.
Note that the order of different operations in a function-update transaction does not matter because they update different operators and are independent of each other. 

\begin{definition}[Conflicting operations]
\label{def:conflicting-operations}
A data operation $\phi(t, o)$ and a function-update operation $\mu(o')$ are said to be {\em conflicting} if $o = o'$, i.e., they are on the same operator.  They are said to be {\em not conflicting} if $o \ne o'$.
\end{definition}

For instance, in Figure~\ref{fig:fd-reconfig-multi-op-dcm}, operations $\phi(t, FM)$ and $\mu(FM)$ are conflicting because they are on the same operator. Operations $\phi(t, FC)$ and $\mu(FM)$ are not conflicting as they are on different operators.

\begin{definition}[Schedule]
\label{def:transaction-schedule}
A {\em schedule} of a set of transactions $T_1,\ldots,T_k$ is the set of all the operations in those transactions with a {\em partial order}. The schedule is called {\em serial} if for each pair of transactions $T_i$ and $T_j$,  $T_i$'s operations in the schedule are either all before those in $T_j$ or all after those in $T_j$.
\end{definition}

In this paper we only consider schedules that include one function-update transaction and many data transactions.

\begin{definition}[Conflict-equivalence]
\label{def:conflict-equivalence}
Two schedules $S_1$ and $S_2$ of the same set of transactions are said to be {\em conflict-equivalent} if $\forall o_i, o_j \in S_1$, if $o_i$ and $o_j$ are conflicting, and $o_i$ is before $o_j$ in $S_1$, then $o_i$ is also before $o_j$ in $S_2$.
\end{definition}

\begin{definition}[Conflict-serializable]
\label{def:conflict-serializable}
A schedule is said to be {\em conflict-serializable} if it is conflict-equivalent to a serial schedule of the same set of transactions. 
\end{definition}

In the rest of the paper, when a partial order of a data transaction or a schedule defines a total order, for simplicity, we just show the transaction or the schedule as a sequence.  We use the running example in Figure~\ref{fig:fd-reconfig-multi-op} to explain these concepts. 

\begin{itemize}
    \item $S_1$ is a schedule of the two transactions $T_1$ and $T_2$: $$S_1: [\phi(t, FC),  \mu(FM), \phi(t, FM), \mu(MC), \phi(t, MC)].$$
    \item $S_2$ is a serial schedule of the two transactions: $$S_2: [\mu(FM), \mu(MC), \phi(t, FC), \phi(t, FM), \phi(t, MC)].$$
    In particular, all $T_2$'s operations in this schedule are before those in  $T_1$.
    \item $S_1$ and $S_2$ are conflict-equivalent. For example, for the conflicting pair $\mu(FM)$ and $\phi(t, FM)$, the former is before the latter in both schedules.
    \item $S_1$ is conflict-serializable because it is conflict-equivalent to the serial schedule $S_2$.
    \item $S_3$ is not a conflict-serializable schedule: $$S_3: [\phi(t, FC),  \phi(t, FM), \mu(FM), \mu(MC), \phi(t, MC)].$$ We can show that $S_3$ is not conflict-equivalent to any serial schedule. Intuitively, it has two pairs of conflicting operations, namely [$\phi(t, FM)$, $\mu(FM)$] and [$\mu(MC)$, $\phi(t, MC)$], and their corresponding transaction orders are different. 
\end{itemize}

$S_3$ is the ``bad'' schedule generated by the naive FCM scheduler in Section~\ref{subsec:naive-fcm-scheduler}, in which tuple $t$ is processed using the old configuration of $FM$ and the new configuration of $MC$.  Schedule $S_1$ is a ``good'' schedule since $t$ is processed entirely using the new configurations of both operators $FM$ and $MC$ and the aforementioned schema-mismatch issue does not happen.

\boldstart{Consistency of epoch-based schedulers.}  Consider the example in Figure~\ref{fig:fd-reconfig-multi-op}. The aforementioned schedule $S1$ in Section~\ref{subsec:reconfig-consistency} is produced by the EBR epoch-based scheduler, where the epoch marker is propagated before tuple $t$. We show that the EBR approach can always produce a conflict-serializable schedule in Lemma~\ref{lemma:epoch-multi}. 
We also show that in general, an epoch-based scheduler always produces conflict-serializable schedules in Lemma~\ref{lemma:epoch-general-multi}.

\begin{lemma}
\label{lemma:epoch-multi}
Every schedule produced by the EBR epoch-based scheduler is conflict-serializable.
\end{lemma}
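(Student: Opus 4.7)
The plan is to show that the schedule $S$ produced by the EBR scheduler is conflict-equivalent to a serial schedule that places a subset of the data transactions entirely before the function-update transaction $T_2$ and the remaining data transactions entirely after $T_2$. The classification into these two subsets follows the natural ``old epoch'' vs.\ ``new epoch'' partition induced by the epoch marker.

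First I would establish the following key invariant of the epoch-alignment protocol used to propagate the marker carrying $\mathcal{R}$. For every channel $c$ in the dataflow, there is a well-defined moment at which the epoch marker traverses $c$, such that every tuple sent through $c$ before the marker was produced from processing that happened before the marker at the sending operator, and every tuple sent through $c$ after the marker was produced from processing that happened after the marker at the sending operator. I would prove this by induction on the topological order of operators in the DAG: source operators satisfy it because the controller injects the marker directly into them and they simply split their output stream; for an internal operator, epoch alignment forces the marker to be emitted on every outgoing channel only after it has been received on every incoming channel and the old-epoch inputs have been processed, which propagates the invariant.

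Next I would use this invariant to classify each source tuple $t$ as \emph{old} (ingested before the marker at its source) or \emph{new} (ingested after the marker), and show by induction along the DAG that every derived tuple $t' \in \mathcal{S}(W,t)$ inherits the same classification on every channel it traverses. Consequently, for every reconfiguration operator $o$ and every $t' \in \mathcal{S}(W,t)$ with receiving operator $o$, the data operation $\phi(t',o)$ is executed before $\mu(o)$ when $t$ is old, and after $\mu(o)$ when $t$ is new, since $\mu(o)$ is applied exactly at the moment the aligned marker arrives at $o$.

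Then I would construct an explicit serial schedule $S'$ of the same transactions as follows: first all data transactions for old source tuples (in any fixed order), then the single function-update transaction $T_2$, and finally all data transactions for new source tuples (in any fixed order). Because the conflict relation (Definition~\ref{def:conflicting-operations}) only pairs a data operation with a function-update operation on the same operator, the intra-group orderings among data transactions are irrelevant to conflict-equivalence. Every conflicting pair in $S$ has the form $(\phi(t',o),\mu(o))$ with $o$ a reconfiguration operator and $t' \in \mathcal{S}(W,t)$ for some source $t$; by the previous step, its relative order in $S$ is determined entirely by whether $t$ is old or new, and the construction of $S'$ preserves that order. Hence $S$ is conflict-equivalent to the serial $S'$, so $S$ is conflict-serializable.

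The main obstacle is the double induction on the DAG needed in the first two steps, namely arguing rigorously that the epoch-alignment protocol does not allow any derived tuple of an old source tuple to ``overtake'' the marker at some downstream operator, nor vice versa. I would rely on the FIFO semantics of each channel combined with the rule that an operator emits the marker on an output channel only after processing all old-epoch inputs, so that the old/new classification of any tuple is stable along every path it follows through the dataflow.
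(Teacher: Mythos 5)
Your proposal is correct and follows essentially the same route as the paper's proof: classify each data transaction as old or new relative to the epoch marker, build the serial schedule with the old transactions before the function-update transaction and the new ones after, and observe that conflicts only pair data operations with function-update operations so the intra-group order is irrelevant. The only difference is that you explicitly prove (by induction on the topological order) the barrier invariant that the paper simply asserts, which makes your write-up more rigorous but not a different argument.
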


\begin{proof}
Let $S$ be a produced schedule. We construct a serial schedule $S'$ using the following steps. Consider the function-update transaction $U$ and each data transaction $T$ for a tuple $t$. Since the epoch marker serve as a barrier dividing the input stream into two epochs, $t$ can  be in only one of the following two cases: 

\begin{itemize}
  \item $t$ is before the epoch marker. For each conflict in $S$ between a data operation $\phi$ in $T$ and a function-update operation $\mu$ in $U$, $\phi$ is before $\mu$. We place $T$ before $U$ in $S'$. Thus, in $S'$, $\phi$ is also before $\mu$.
  \item $t$ is after the epoch marker. Similarly, we place $T$ after $U$ in $S'$. Each conflict order in $S$ remains the same in $S'$.
\end{itemize}

For those data transactions before $U$, we order them in $S'$ following the order of their first data operations in $S$. For data transactions after $U$, we order them in $S'$ following the order of their first data operations in $S$. Notice that there are no conflicts between two data transactions.

The schedule $S$ is conflict-equivalent to the constructed serial schedule $S'$ because all conflicting pairs in $S$ have the same order in $S'$. Therefore, $S$ is conflict-serializable.
\end{proof}

\begin{lemma}
\label{lemma:epoch-general-multi}
Every schedule produced by a general epoch-based scheduler is conflict-serializable.
\end{lemma}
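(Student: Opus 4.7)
The plan is to lift the argument of Lemma~\ref{lemma:epoch-multi} to the abstract definition of an epoch-based scheduler, replacing the concrete ``epoch marker barrier'' mechanism with the defining property from Definition~3.2. The content of that definition is exactly what we need: for every reconfiguration operator $o$, each tuple it processes belongs either to the old epoch (processed with the old configuration of $o$) or to the new epoch (processed with the new configuration of $o$), and this is a single global partition determined by the scheduler.

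I would fix an arbitrary schedule $S$ produced by such a scheduler, let $U$ be its function-update transaction, and let $T_1,\ldots,T_m$ be its data transactions. Since a data transaction $T_i$ is defined by a single source tuple $t_i$ (Definitions~\ref{def:data-transaction} and \ref{def:tuple-scope}) and every tuple in its scope is derived from $t_i$, the old/new dichotomy for $t_i$ propagates to every data operation in $T_i$: if $t_i$ is in the old epoch then for every reconfiguration operator $o$, the operation $\phi(t, o)$ for every $t$ in $t_i$'s scope is scheduled before $\mu(o)$ in $S$; symmetrically in the new-epoch case. For data operations on non-reconfiguration operators there are no conflicts to worry about at all.

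Next I would construct a serial schedule $S'$ by placing (i) all old-epoch data transactions first, ordered by the position of their first operation in $S$, (ii) then all operations of $U$ (in any order, since operations in $U$ target distinct operators and hence do not conflict with each other), and (iii) then all new-epoch data transactions, again ordered by first-operation position in $S$. To verify conflict-equivalence, I would examine every conflicting pair. By Definition~\ref{def:conflicting-operations} conflicts only occur between a data operation $\phi(t,o)$ and a function-update operation $\mu(o)$ on the same operator, so two data transactions never conflict and the inter-transaction ordering inside groups (i) and (iii) is immaterial. For a conflicting pair $(\phi(t,o),\mu(o))$, the old/new assignment of the source tuple determines its order in both $S$ and $S'$, yielding agreement.

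The main obstacle is less a technical one than a definitional one: making precise that Definition~3.2 induces a single, globally coherent partition of source tuples into old- and new-epoch that every reconfiguration operator respects. Once that reading is fixed, the proof is essentially a cleaner version of Lemma~\ref{lemma:epoch-multi}, with the epoch marker replaced by an abstract epoch boundary and with no appeal to how that boundary is implemented (markers, mini-batches, stop-and-restart, etc.).
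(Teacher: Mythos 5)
Your proposal is correct and follows essentially the same route as the paper: it reduces the general case to the old-epoch/new-epoch dichotomy for each source tuple's data transaction, places each transaction before or after the function-update transaction accordingly, and verifies conflict-equivalence exactly as in Lemma~\ref{lemma:epoch-multi}. The paper's own proof is just a terser version of this same argument (compare the tuple's epoch with the reconfiguration boundary, then reuse the steps of Lemma~\ref{lemma:epoch-multi}), so there is nothing substantively different to flag.
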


\begin{proof}
Consider a function-update transaction $U$ and each data transaction $T$ for a source tuple $t$. For a general epoch-based scheduler, all the tuples in the scope of $t$ are in one epoch $E_t$ and $U$ is scheduled between two epochs $E_{i}$ and $E_{i+1}$. We can compare $E_t$ with the boundary between $E_{i}$ and $E_{i+1}$ to determine the position of $T$ in a serial schedule.
We can prove this claim using steps similar to those in the proof of Lemma~\ref{lemma:epoch-multi}.
\end{proof}

\ifshort

\else
\begin{proof}
Let $S$ be a produced schedule. We construct a serial schedule $S'$ using the following steps. Consider the function-update transaction $U$ and each data transaction $T$ for a tuple $t$. Since the epoch marker serve as a barrier dividing the input stream into two epochs, $t$ can  be in only one of the following two cases: 

\begin{itemize}
  \item $t$ is before the epoch marker. For each conflict in $S$ between a data operation $\phi$ in $T$ and a function-update operation $\mu$ in $U$, $\phi$ is before $\mu$. We place $T$ before $U$ in $S'$. Thus, in $S'$, $\phi$ is also before $\mu$.
  \item $t$ is after the epoch marker. Similarly, we place $T$ after $U$ in $S'$. Each conflict order in $S$ remains the same in $S'$.
\end{itemize}

For those data transactions before $U$, we order them in $S'$ following the order of their first data operations in $S$. For data transactions after $U$, we order them in $S'$ following the order of their first data operations in $S$. Notice that there are no conflicts between two data transactions.

The schedule $S$ is conflict-equivalent to the constructed serial schedule $S'$ because all conflicting pairs in $S$ have the same order in $S'$. Therefore, $S$ is conflict-serializable.
\end{proof}

\fi

\section{Dataflows with one-to-one Operators Only}
\label{sec:hybrid-single-tuple}

In this section, we consider the case where a dataflow contains one-to-one operators only. We propose a scheduler called \sysname, which uses FCM's to achieve low reconfiguration delay and still guarantees conflict-serializability of produced schedules. 

\begin{definition}[One-to-one operator]
\label{def:one-to-one-op}
An operator is called {\em one-to-one} if its processing function emits at most one (tuple, receiving operator) pair for each input tuple.
\end{definition}
This type includes operators such as projection, filter, map function, equi-join on key attributes, and union.

\begin{definition}[One-to-many operator]
\label{def:one-to-many-op}
An operator is called {\em one-to-many} if its processing function can emit more than one output (tuple, receiving operator) pair for an input tuple.
\end{definition}
This type includes operators such as join on non-key attributes and flatten function.
In the rest of this section, we consider dataflows where all operators in the dataflow are one-to-one.

\subsection{Conflict-Serializable Schedules Produced by the Naive FCM-based Scheduler}
\label{subsec:dcm-multi-op}

Section~\ref{subsec:naive-fcm-scheduler} shows an example dataflow and a reconfiguration where the naive FCM-based scheduler produces a non-conflict-serializable schedule. Next we use an example to show that the naive scheduler can still guarantee conflict-serializability for some types of dataflows and reconfigurations.

\begin{example}
\label{example:dcm-multi-independent}

Suppose we want to use the naive FCM-based scheduler to handle a reconfiguration of the two operators $C$ and $D$ as shown in Figure~\ref{fig:multi-op-dcm}. Operator $X$ is a one-to-one operator that splits the outputs tuples to operators $C$ and $D$.  In this case, we have a data transaction $T_3=[\phi(t_1, X),\phi(t_1, C])$, another data transaction $T_4=[\phi(t_2, X),\phi(t_2, D])$, and a function-update transaction $U=$ $[\mu(C),$ $\mu(D)]$.  The controller sends two separate FCM's to $C$ and $D$. Consider a possible schedule with $T_3$, $T_4$, and $U$:
$$S_4: [\phi(t_1, X), \mu(C), \phi(t_1, C), \phi(t_2, X), \mu(D), \phi(t_2, D)].$$
Schedule $S_4$ is conflict-serializable because it is conflict-equivalent to the serial schedule $[U, T_3, T_4]$. Interestingly, we can show that all schedules produced by the naive FCM-based scheduler in Figure~\ref{fig:multi-op-dcm} are conflict-serializable.

\end{example}

\begin{figure}[htbp]
	\includegraphics[width=2.0in]{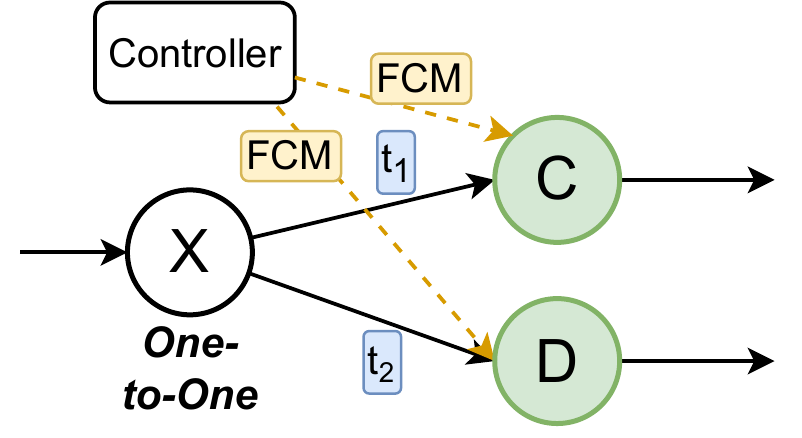}
	\caption{\label{fig:multi-op-dcm}
		\textbf{An example dataflow with a reconfiguration on operators $C$ and $D$. The naive FCM-based scheduler always produces a conflict-serializable schedule. }
	}
\end{figure}

\ifshort 

\else
\begin{proof}
Let $S$ be a produced schedule. We construct a serial schedule $S'$ using the following steps.
Consider the function-update transaction $U$ and each data transaction $T$ for a tuple $t$. In a dataflow with one-to-one operators only, tuple $t$ processed by operator $X$ can be either dropped by $X$, or output to either operator $C$ or operator $D$. Therefore, there are two possible cases:
\begin{itemize}
  \item $T$ has no conflict with $U$. We place $T$ before $U$ in $S'$. 
  \item $T$ has a single conflicting operation with $U$ on either $C$ or $D$, but not both. If the data operation in the conflict is before the corresponding function-update operation, we place $T$ before $U$ in $S'$. Otherwise, we place $T$ after $U$ in $S'$.
\end{itemize}

For those data transactions before $U$, we order them in $S'$ following the order of their first data operations in $S$. For data transactions after $U$, we order them in $S'$ following the order of their first data operations in $S$. Notice that there are no conflicts between two data transactions.

The schedule $S$ is conflict-equivalent to the constructed serial schedule $S'$ because all conflicting pairs in $S$ have the same order in $S'$. Therefore, $S$ is conflict-serializable.
\end{proof}
\fi

One might wonder why the two examples in Figure~\ref{fig:fd-reconfig-multi-op-dcm} and Figure~\ref{fig:multi-op-dcm} are different in the conflict-serializability of the produced schedules.  The main reason is that in Figure~\ref{fig:fd-reconfig-multi-op-dcm}, a tuple can be processed by operators $FM$ and $MC$, and both of them are in the reconfiguration. But there is no synchronization between the data operations and the function-update operations, causing the non-conflict-serializability. While in Figure~\ref{fig:multi-op-dcm}, a tuple is processed by only one of the two paths through either $C$ or $D$. On each path, there is a single operator in the reconfiguration, thus the data operations and the function-update operations are always synchronized.

\label{sec:hybrid-multi-op}
Next, we introduce a concept called ``minimal covering sub-DAG,'' which is used to represent the synchronization components. We then describe the \sysname scheduler using this concept, and prove that this scheduler can always produce a conflict-serializable schedule.

\subsection{Minimal Covering Sub-DAG (MCS)}

\begin{definition}[Minimal covering sub-DAG]
\label{def:minimal-covering-sub-dag}
Given a DAG $G = (V, E)$, and a set of vertices $M \subseteq V$, a minimal covering sub-DAG $G' = (V', E')$ is defined as follows:

\begin{enumerate}
  \item $M \subseteq V'$;
  \item $\forall A, B \in M$, if there is a path from $A$ to $B$, then all the vertices and edges on the path are in $V'$ and $E'$, respectively;
  \item $G'$ is minimal, i.e., no proper sub-DAG of $G'$ can satisfy the above two conditions.
\end{enumerate}

\end{definition}

\begin{lemma}
\label{lemma:mcs-unique}
There is a unique MCS given a DAG and a set of vertices. 
\end{lemma}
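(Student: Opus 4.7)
}
The plan is to exhibit a canonical candidate $G^\star = (V^\star, E^\star)$ and then show that every sub-DAG satisfying conditions (1) and (2) of Definition~\ref{def:minimal-covering-sub-dag} must contain $G^\star$; minimality in condition (3) then forces the MCS to equal $G^\star$, giving uniqueness. Concretely, I would define
\[
V^\star \;=\; M \,\cup\, \{\,v \in V : v \text{ lies on some directed path in } G \text{ from some } A \in M \text{ to some } B \in M\,\},
\]
\[
E^\star \;=\; \{\,e \in E : e \text{ lies on some directed path in } G \text{ from some } A \in M \text{ to some } B \in M\,\}.
\]
The first step is to verify that $G^\star$ is itself a sub-DAG of $G$ (it is a subgraph of a DAG, hence acyclic) and that it satisfies conditions (1) and (2) of the definition: (1) holds by construction, and (2) holds because any path from $A \in M$ to $B \in M$ in $G$ is, by definition, a witness that puts every one of its vertices into $V^\star$ and every one of its edges into $E^\star$.

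Next I would prove the containment statement: if $G' = (V', E')$ is \emph{any} sub-DAG satisfying (1) and (2), then $V^\star \subseteq V'$ and $E^\star \subseteq E'$. For a vertex $v \in V^\star$, either $v \in M \subseteq V'$ by (1), or $v$ lies on some path from some $A \in M$ to some $B \in M$, in which case (2) applied to that path places $v$ in $V'$. The argument for $E^\star \subseteq E'$ is identical, using (2) on the path witnessing $e \in E^\star$.

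Combining the two steps, $G^\star$ is a sub-DAG of $G'$ that already satisfies (1) and (2). The minimality clause (3) then rules out the possibility of $G^\star \subsetneq G'$, so $G' = G^\star$. Since $G^\star$ depends only on $G$ and $M$, the MCS is unique.

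I do not anticipate a serious obstacle here; the main point requiring care is a clean reading of condition (2), namely that it must hold for \emph{every} path between any two marked vertices (not just some chosen witness path), so that both the construction of $V^\star, E^\star$ and the containment argument are tight. A minor bookkeeping point is to make explicit that ``sub-DAG'' means a subgraph that is still acyclic, which is automatic because $G^\star \subseteq G$ and $G$ is a DAG.
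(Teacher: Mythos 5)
Your proof is correct, but it takes a different route from the paper's. The paper argues by contradiction: it supposes two distinct MCS's $G'_1$ and $G'_2$ exist, forms their intersection $G'_3$, checks that $G'_3$ still satisfies conditions (1) and (2) of Definition~\ref{def:minimal-covering-sub-dag} (membership of $M$ and closure under paths between marked vertices are both preserved under intersection), and concludes that $G'_3$ would violate the minimality of $G'_1$ and $G'_2$. You instead construct the canonical candidate $G^\star$ explicitly, verify it satisfies (1) and (2), and show it is contained in \emph{every} sub-DAG satisfying (1) and (2), so minimality forces equality with $G^\star$. Both arguments are sound, and both hinge on the same reading of condition (2) that you flag (closure under all paths between marked vertices, not just a chosen witness). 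What your approach buys is more: it simultaneously establishes existence of the MCS and gives an explicit characterization of it --- $V^\star$ is exactly the set of vertices in $M$ or lying on a path between two vertices of $M$ --- which is precisely what Algorithm~\ref{alg:find-mcs} computes via its red/blue marking, so your proof doubles as a correctness argument for that algorithm. The paper's intersection argument is shorter but only delivers uniqueness and leaves the characterization implicit.
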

\begin{proof}
Suppose $G'_1$ and $G'_2$ are two distinct MCS's of a DAG and a set of vertices $M$. Consider the sub-DAG $G'_3$ that is the ``intersection'' of $G'_1$ and $G'_2$, i.e., the vertices of $G'_3$ are the intersection of the two sets of vertices in $G'_1$ and $G'_2$, the edges of $G'_3$ are the intersection of the two sets of edges in $G'_1$ and $G'_2$. Since $M$ is a subset of the sets of vertices in $G'_1$ and $G'_2$, $M$ is also a subset of the vertices in $G'_3$. Thus $G'_3$ satisfies property (1).  $\forall A, B \in M$, if there is a path $p$ from $A$ to $B$, $p$ is also in both $G'_1$ and $G'_2$, so $p$ is also in the $G'_3$. Thus $G'_3$ also satisfies property (2). 
Therefore, $G'_3$ is also an MCS, which contradicts the minimality property (3) of $G'_1$ and $G'_2$. 
\end{proof}

Algorithm~\ref{alg:find-mcs} shows an algorithm for finding the minimal covering sub-DAG (MCS) given a DAG $G$ and a set of vertices $M$. 
In lines~\ref{alg:find-mcs-forward-begin}-~\ref{alg:find-mcs-forward-end}, we iterate through the DAG in a topological order. For each vertex $v$, we mark $v$ in ``red'' if $v$ is in $M$ or any parent vertex of $v$ is marked in ``red.'' After this iteration, a vertex marked in ``red'' is either 1) in $M$, or 2) a descendant of a vertex in $M$. 
Next in lines~\ref{alg:find-mcs-backward-begin}-~\ref{alg:find-mcs-backward-end}, we iterate through the DAG in a reverse topological order. For each vertex $v$, we mark $v$ in ``blue'' if $v$ is in $M$ or any child of $v$ is marked in ``blue.'' After this iteration, a vertex marked in ``blue'' is either 1) in $M$, or 2) is an ancestor of a vertex in $M$.
Finally, in lines~\ref{alg:find-mcs-vertices-begin}-~\ref{alg:find-mcs-vertices-end},  we add all the vertices marked in both ``red'' and ``blue'' to the MCS because these vertices are either 1) in $M$, or 2) on a path between two operators in $M$. Then we add all the edges connecting these vertices to the MCS. 

The time complexity of this algorithm is $O(V + E)$, specifically: 
\begin{itemize}
    \item The topological ordering in line~\ref{alg:find-mcs-topo} takes $O(V + E)$~\cite{books/daglib/0032835};
    \item The loop of marking ``red'' in lines~\ref{alg:find-mcs-forward-begin}-~\ref{alg:find-mcs-forward-end} takes $O(V + E)$ because we iterate through each vertex once and look at every edge once;
    \item Similarly, the loop of marking ``blue'' in lines~\ref{alg:find-mcs-backward-begin}-~\ref{alg:find-mcs-backward-end} also takes $O(V + E)$;
    \item The loop in lines lines~\ref{alg:find-mcs-vertices-begin}-~\ref{alg:find-mcs-vertices-end} takes $O(V)$ and the loop in lines lines~\ref{alg:find-mcs-edges-begin}-~\ref{alg:find-mcs-edges-end} takes $O(E)$.
\end{itemize}

\begin{algorithm}
\caption{Find Minimal Covering SubDAG}\label{alg:find-mcs}
\algrenewcommand\algorithmicrequire{\textbf{Input:}}
\algrenewcommand\algorithmicensure{\textbf{Output:}}
\begin{algorithmic}[1]
\Require{dataflow DAG $G=(V,E)$}
\Require{$M = \{m_1,\ldots,m_n\}$}
    \State $D \leftarrow \varnothing$
    \For{each $v \in V$}
        \State $D[v] \leftarrow \varnothing$
    \EndFor
    \State let $v_1, \ldots, v_{|V|}$ be a topological ordering of $G$ \label{alg:find-mcs-topo}
    \For{each $v \leftarrow v_1, \ldots, v_{|V|}$} \label{alg:find-mcs-forward-begin}
        \If{$v \in M$}
            \State add ``red'' to $D[v]$
        \EndIf
        \For{each incoming edge $e$ of $v$}
            \If{``red'' $\in D[e.from]$}
                \State add ``red'' to $D[v]$
            \EndIf
        \EndFor 
    \EndFor \label{alg:find-mcs-forward-end}
    \For{each $v \leftarrow v_{|V|}, \ldots, v_1$} \label{alg:find-mcs-backward-begin}
        \If{$v \in M$}
            \State add ``blue'' to $D[v]$
        \EndIf
        \For{each outgoing edge $e$ of $v$}
            \If{``blue'' $\in D[e.to]$}
                \State add ``blue'' to $D[v]$
            \EndIf
        \EndFor 
    \EndFor \label{alg:find-mcs-backward-end}
    \State $V'=\{\}$ 
    \For{each $v \in V$} \label{alg:find-mcs-vertices-begin}
        \If{$D[v] =$ \{``red'', ``blue''\}}
            \State add $v$ to $V'$
        \EndIf
    \EndFor \label{alg:find-mcs-vertices-end}
    \State $E'=\{\}$ 
    \For{each $e \in E$} \label{alg:find-mcs-edges-begin}
        \If{$e.from \in V' \land e.to \in V'$}
            \State add $e$ to $E'$
        \EndIf
    \EndFor \label{alg:find-mcs-edges-end}
    \State \textbf{return} $(V', E')$
\end{algorithmic}
\end{algorithm}

Figure~\ref{fig:hybrid-min-subdag} shows the minimal covering sub-DAG for the dataflow graph in Figure~\ref{fig:scope-example} and the set of operators $\{C,F,G\}$ in the reconfiguration. The sub-DAG is: $V' = \{C,D,E,F,G\}$ and $E'=\{C{\rightarrow}D,C{\rightarrow}E    ,D{\rightarrow}F,E{\rightarrow}F\}$.  In general, we can show that there is a unique MCS given a DAG and a set of vertices, and we can compute the MCS using an algorithm with an $O(V + E)$ time complexity.

\begin{figure}[htbp]
	\includegraphics[width=2.5in]{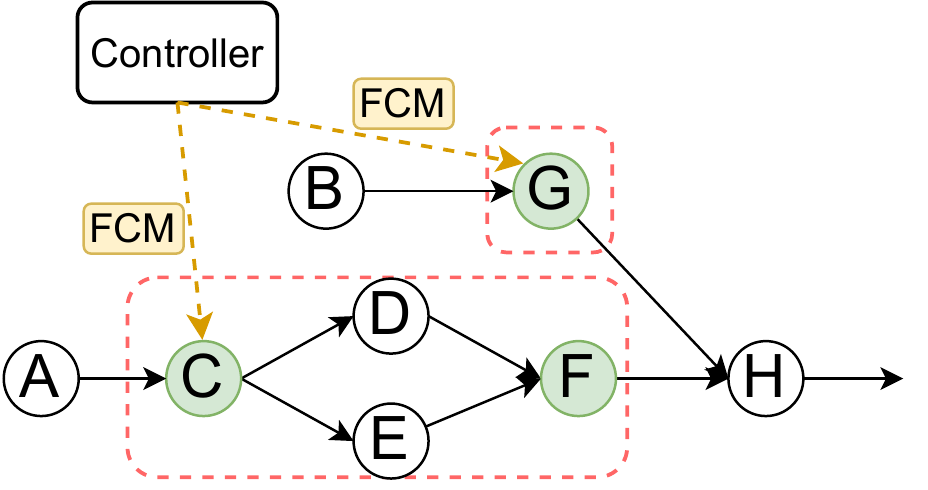}
	\caption{\label{fig:hybrid-min-subdag}
	Two components of the minimal covering sub-DAG used in the \sysname scheduler are highlighted in red.
	}
\end{figure}

\ifshort

\else
Algorithm~\ref{alg:find-subdag} shows an algorithm for finding the minimal covering sub-DAG (MCS) given a DAG $G$ and a set of vertices $M$. 
In lines 5-10, we iterate through the DAG in a topological order. For each vertex $v$, we mark $v$ in ``red'' if $v$ is in $M$ or any parent vertex of $v$ is marked in ``red.'' After this iteration, a vertex marked in ``red'' is either 1) in $M$, or 2) a descendant of a vertex in $M$. 
Next in lines 11-16, we iterate through the DAG in a reverse topological order. For each vertex $v$, we mark $v$ in ``blue'' if $v$ is in $M$ or any child of $v$ is marked in ``blue.'' After this iteration, a vertex marked in ``blue'' is either 1) in $M$, or 2) is an ancestor of a vertex in $M$.
Finally, in lines 17-20,  we add all the vertices marked in both ``red'' and ``blue'' to the MCS because these vertices are either 1) in $M$, or 2) on a path between two operators in $M$. Then we add all the edges connecting these vertices to the MCS. 

The time complexity of this algorithm is $O(V + E)$, specifically: 
\begin{itemize}
    \item The topological ordering in line 4 takes $O(V + E)$~\cite{books/daglib/0032835};
    \item The loop of marking ``red'' in lines 5-10 takes $O(V + E)$ because we iterate through each vertex once and look at every edge once;
    \item Similarly, the loop of marking ``blue'' in lines 11-16 also takes $O(V + E)$;
    \item The loop in lines 18-20 takes $O(V)$ and the loop in lines 22-24 takes $O(E)$.
\end{itemize}

\begin{algorithm}
\caption{Find Minimal Covering SubDAG}\label{alg:find-subdag}
\algrenewcommand\algorithmicrequire{\textbf{Input:}}
\algrenewcommand\algorithmicensure{\textbf{Output:}}
\begin{algorithmic}[1]
\Require{dataflow DAG $G=(V,E)$}
\Require{$M = \{m_1,\ldots,m_n\}$}
    \State $D \leftarrow \varnothing$
    \For{each $v \in V$}
        \State $D[v] \leftarrow \varnothing$
    \EndFor
    \State let $v_1, \ldots, v_{|V|}$ be a topological ordering of $G$
    \For{each $v \leftarrow v_1, \ldots, v_{|V|}$}
        \If{$v \in M$}
            \State add ``red'' to $D[v]$
        \EndIf
        \For{each incoming edge $e$ of $v$}
            \If{``red'' $\in D[e.from]$}
                \State add ``red'' to $D[v]$
            \EndIf
        \EndFor 
    \EndFor
    \For{each $v \leftarrow v_{|V|}, \ldots, v_1$}
        \If{$v \in M$}
            \State add ``blue'' to $D[v]$
        \EndIf
        \For{each outgoing edge $e$ of $v$}
            \If{``blue'' $\in D[e.to]$}
                \State add ``blue'' to $D[v]$
            \EndIf
        \EndFor 
    \EndFor
    \State $V'=\{\}$
    \For{each $v \in V$}
        \If{$D[v] =$ \{``red'', ``blue''\}}
            \State add $v$ to $V'$
        \EndIf
    \EndFor
    \State $E'=\{\}$
    \For{each $e \in E$}
        \If{$e.from \in V' \land e.to \in V'$}
            \State add $e$ to $E'$
        \EndIf
    \EndFor
    \State \textbf{return} $(V', E')$
\end{algorithmic}
\end{algorithm}

Figure~\ref{fig:find-min-subdag} shows the result of running the algorithm on the DAG with vertices $M=\{C, F, G\}$. The vertices $C$, $D$, $E$, $F$, and $G$ are marked in both ``red'' and ``blue,'' and are included in the final MCS.

\begin{figure}[htbp]
	\includegraphics[width=2.5in]{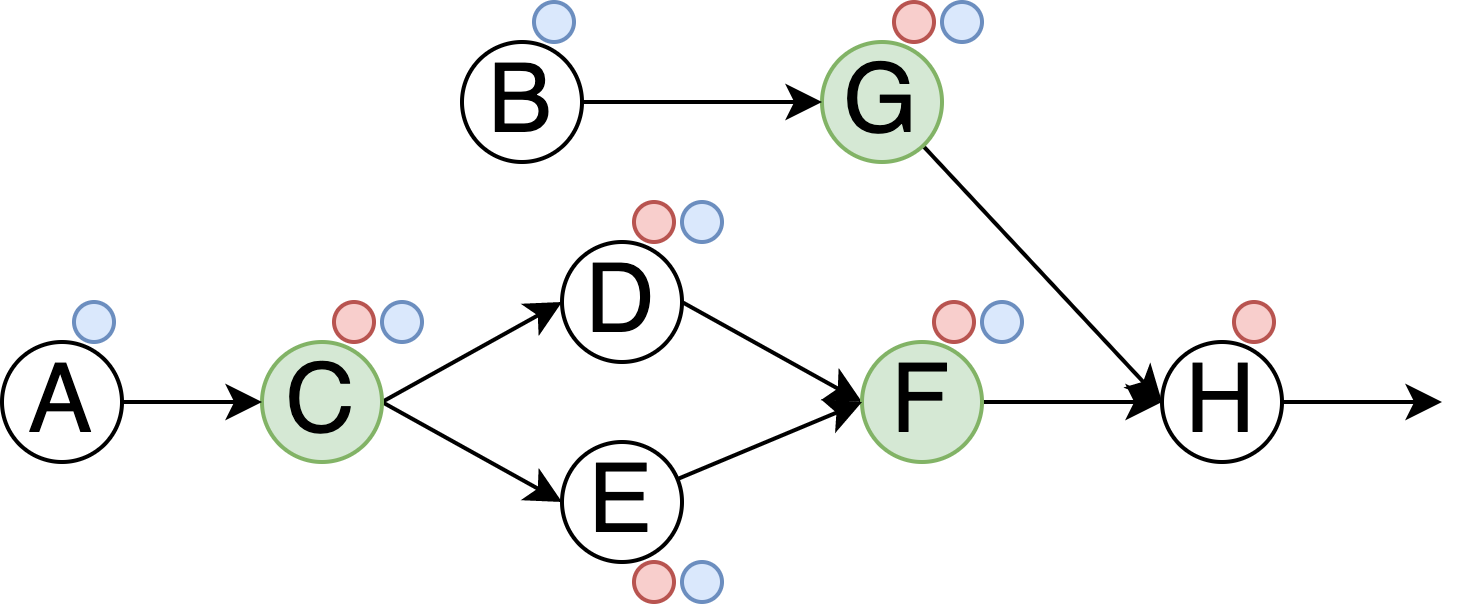}
	\caption{\label{fig:find-min-subdag}
	Illustration of how algorithm~\ref{alg:find-subdag} computes the final MCS by marking vertices in ``red'' and ``blue.''
	}
\end{figure}
\fi

\subsection{The \sysname Scheduler}
The \sysname scheduler uses components of the MCS to schedule the reconfiguration. A {\em component} is a maximal sub-DAG of the MCS where every pair of vertices in the component are connected by a path, ignoring the direction of edges. For example, the sub-DAG in Figure~\ref{fig:hybrid-min-subdag} has two components, each marked in a red box. The components of the MCS can be also computed using an algorithm~\cite{books/daglib/0017733} with an $O(V+E)$ time complexity.

The \sysname scheduler is formally described in Algorithm~\ref{alg:hybrid-reconfig}. We first construct the minimal covering sub-DAG from the original dataflow DAG and operators in the reconfiguration (lines~\ref{alg:hybrid:line:find-mcs-start} and~\ref{alg:hybrid:line:find-mcs-end}). 
We compute the components within the MCS (line~\ref{alg:hybrid:line:find-components}).
For each component in the MCS, the controller sends an FCM to the ``head'' operators, i.e., those with no input edges in the component. The head operators then start propagating an epoch marker within the component (lines~\ref{alg:hybrid:line:epoch-start} to~\ref{alg:hybrid:line:epoch-end}).  Specifically, when an operator receives an epoch marker, it performs marker alignment on the input edges in its component. An operator sends an epoch marker only to its downstream operators in its component.

\begin{algorithm}
\caption{The \sysname Scheduler (for dataflows with one-to-one operators only)}\label{alg:hybrid-reconfig}
\algrenewcommand\algorithmicrequire{\textbf{Input:}}
\algrenewcommand\algorithmicensure{\textbf{Output:}}
\begin{algorithmic}[1]
\Require{$G=(V,E)$}
\Require{$\mathcal{R} = \{ (o_1,U_1),\ldots, (o_n,U_n) \}$}
    \State $M \leftarrow \{o_1, \ldots, o_n\}$ \label{alg:hybrid:line:find-mcs-start}
    \State $G' \leftarrow findMCS(G, M)$ \label{alg:hybrid:line:find-mcs-end}
    \State $\mathcal{C}_1, \ldots, \mathcal{C}_p \leftarrow findComponents(G')$ \label{alg:hybrid:line:find-components}
    
    \For {each $\mathcal{C} \leftarrow \mathcal{C}_1, \ldots, \mathcal{C}_p$} \label{alg:hybrid:line:epoch-start}
        \State send an FCM to the each head operator in $\mathcal{C}$
        \State start propagating an epoch marker within $\mathcal{C}$
    \EndFor \label{alg:hybrid:line:epoch-end}
\end{algorithmic}
\end{algorithm}

As an example, in Figure~\ref{fig:hybrid-min-subdag}, the controller sends an FCM to operator $C$, which is the only head operator of the first component. The controller also sends an FCM to operator $G$, which is the only head operator of the second component.
When $C$ receives the FCM, it applies the new configuration and starts propagating an epoch marker to operators $D$ and $E$. These operators then forward the marker to operator $F$. When $F$ receives the marker from both $D$ and $E$, it applies the new configuration and stops the marker propagation. When operator $G$ receives the marker, it applies the new configuration and does not send out an epoch marker. 

Next, we show that the \sysname scheduler can always produce a conflict-serializable schedule.

\begin{lemma}
\label{lemma:mcs-component}
Consider a dataflow graph $G$ with one-to-one operators only, with a reconfiguration $\mathcal{R}$, and the MCS $G'$ generated by Algorithm~\ref{alg:hybrid-reconfig}.
Each component of $G'$ contains at least one reconfiguration operator.
\end{lemma}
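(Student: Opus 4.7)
The plan is to prove the lemma by contradiction, leveraging the minimality condition (property 3) in Definition~\ref{def:minimal-covering-sub-dag}. Suppose some component $\mathcal{C}$ of $G'$ contains no operator from the reconfiguration set $M = \{o_1, \ldots, o_n\}$. I will show that $G'$ with $\mathcal{C}$ removed still satisfies properties (1) and (2) of the MCS definition, contradicting property (3).

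First, since $\mathcal{C} \cap M = \emptyset$ by assumption, removing all vertices and edges of $\mathcal{C}$ from $G'$ keeps $M$ inside the remaining sub-DAG, so property (1) continues to hold. The main obstacle is showing property (2) still holds: for every pair $A, B \in M$, every directed path from $A$ to $B$ in $G$ that lies in $G'$ must still lie in $G' \setminus \mathcal{C}$. Equivalently, no such path can visit a vertex in $\mathcal{C}$.

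To establish this, I would argue as follows. Suppose, toward a further contradiction, that some directed path $p$ from $A \in M$ to $B \in M$ passes through an intermediate vertex $v \in \mathcal{C}$. All vertices of $p$ lie in $G'$ by property (2) applied to the original MCS. Since $p$ is a directed path, it induces an undirected path in $G'$ connecting $A$, $v$, and $B$. But components of $G'$ are maximal connected sub-DAGs under the undirected view, so $A$ and $B$ must lie in the same component as $v$, namely $\mathcal{C}$. This contradicts the assumption that $\mathcal{C}$ contains no vertex of $M$.

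Hence no path between two $M$-vertices crosses $\mathcal{C}$, which means $G' \setminus \mathcal{C}$ is a strictly smaller sub-DAG satisfying properties (1) and (2), contradicting the minimality of $G'$ (property 3). Therefore every component of $G'$ must contain at least one operator from $M$. The only subtle point to double-check is the treatment of the endpoints of paths: if $A$ or $B$ themselves were in $\mathcal{C}$, the same argument applies directly because $A, B \in M$ would already contradict $\mathcal{C} \cap M = \emptyset$; otherwise the intermediate-vertex argument above handles the case. The rest of the proof is a routine verification that the three defining properties of an MCS are preserved.
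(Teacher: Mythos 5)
Your proof is correct and follows essentially the same strategy as the paper's: assume a component $\mathcal{C}$ disjoint from $M$, remove it, and contradict the minimality property (3) of the MCS. In fact, your connectivity argument for why property (2) survives the removal (any $A$--$B$ path touching $\mathcal{C}$ would place $A$ and $B$ in $\mathcal{C}$'s component, contradicting $\mathcal{C} \cap M = \emptyset$) spells out a detail the paper only gestures at by citing ``similar steps'' to an earlier lemma.
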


\begin{proof}
By the construction of Algorithm~\ref{alg:hybrid-reconfig}, the set of reconfiguration operators $M$ in $\mathcal{R}$ are used to construct the MCS $G'$.
Suppose all the vertices in a component $\mathcal{C}$ of $G'$ are not in $M$. We construct a new DAG $G''$ by removing all the vertices and edges in $C$ from $G'$. 
Using similar steps as in Lemma~\ref{lemma:mcs-unique}, we can show that $G''$ is still a minimal covering sub-DAG of $G$ and $M$. This result contradicts the minimality property of $G'$.
\end{proof}

\begin{lemma}
\label{lemma:hybrid-component}
In dataflows with one-to-one operators only, consider a dataflow graph $G$ with a reconfiguration $\mathcal{R}$, and the MCS $G'$ generated by Algorithm~\ref{alg:hybrid-reconfig}. For each source tuple, the operators in its data transaction overlap with at most one component of $G'$.
\end{lemma}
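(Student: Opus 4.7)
The plan is to show that whenever two operators from the data transaction lie in the MCS, the entire directed path between them (which exists because the dataflow is one-to-one) also lies in the MCS, forcing them into the same component.

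First I would exploit the structure of one-to-one dataflows: since each operator emits at most one downstream (tuple, receiving operator) pair, the scope of a source tuple $t$ traces out a single directed chain $O_1 \to O_2 \to \cdots \to O_k$ in $G$. Thus the operators appearing in $t$'s data transaction are totally ordered by the flow direction, and for any two of them $O_i,O_j$ with $i<j$ there is a directed path $P$ from $O_i$ to $O_j$ in $G$.

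Next I would argue by contradiction: suppose the data-transaction operators overlap two distinct components $\mathcal{C}_1,\mathcal{C}_2$ of $G'$. Pick $O_i\in\mathcal{C}_1$ and $O_j\in\mathcal{C}_2$ from the chain, with $O_i$ upstream of $O_j$. Because both lie in the MCS, by Algorithm~\ref{alg:find-mcs} each is marked both ``red'' and ``blue''. The red marking on $O_i$ yields an ancestor reconfiguration operator $r_1$ (possibly $O_i$ itself), and the blue marking on $O_j$ yields a descendant reconfiguration operator $r_2$ (possibly $O_j$ itself). Now I would show that every internal vertex $v$ on the path $P$ inherits both markings: $v$ is a descendant of $O_i$ and hence of $r_1$, so $v$ is red; and $v$ is an ancestor of $O_j$ and hence of $r_2$, so $v$ is blue. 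Consequently $v\in V'$, and since the edge-inclusion rule in Algorithm~\ref{alg:find-mcs} adds any edge whose endpoints are in $V'$, all edges of $P$ are in $E'$ as well.

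Therefore $P$ lies entirely inside $G'$, so $O_i$ and $O_j$ are connected in $G'$ even when we ignore edge directions, placing them in a single weakly connected component. This contradicts $\mathcal{C}_1\ne\mathcal{C}_2$. The main obstacle I anticipate is not the topological argument itself but being careful about degenerate cases: a tuple dropped before reaching any MCS operator (zero overlap, which is fine), and the case where $O_i$ or $O_j$ is itself a reconfiguration operator so that the ``ancestor/descendant'' witness collapses to the vertex itself; both cases fit naturally by allowing $r_1=O_i$ and $r_2=O_j$ in the red/blue propagation argument.
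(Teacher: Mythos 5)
Your proof is correct and follows essentially the same route as the paper's: assume the chain of $t$ meets two distinct components, exhibit the directed path between the two meeting points, and show that this path is forced into the MCS, contradicting the components being distinct. The only difference is that the paper jumps from Lemma~\ref{lemma:mcs-component} (each component contains some operator of $M$) directly to ``there must be a path between $A$ and $B$,'' whereas you justify the inclusion of the connecting path more carefully via the red/blue marking of Algorithm~\ref{alg:find-mcs}, extending the chain segment through the witnesses $r_1$ and $r_2$ to an $M$-to-$M$ path --- a welcome tightening of a step the paper leaves implicit.
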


\begin{proof}
Suppose the operators processing a tuple $t$ overlap with two components $\mathcal{C}_1$ and $\mathcal{C}_2$ in the MCS. Based on Lemma~\ref{lemma:mcs-component}, there is an operator $A$ in $\mathcal{C}_1$ and another operator $B$ in $\mathcal{C}_2$, where $A, B \in M$.
In dataflows with one-to-one operators only, tuple $t$ goes through a {\em chain} of operators and there must be a path between $A$ and $B$ in $G$. By Definition~\ref{def:minimal-covering-sub-dag}, the path must also be in $G'$. By the definition of components, $A$ and $B$ must be in the same component of $G'$, which contradicts the assumption.
\end{proof}

\begin{theorem}
\label{theorem:hybrid-method}
In dataflows with one-to-one operators only, the \sysname scheduler in Algorithm~\ref{alg:hybrid-reconfig} always produces a conflict-serializable schedule.
\end{theorem}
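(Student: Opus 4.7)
The plan is to mimic the proof of Lemma~\ref{lemma:epoch-multi}, but applied component-by-component within the MCS $G'$. The key observation is that within each component $\mathcal{C}_j$ the \sysname scheduler behaves exactly like the EBR scheduler on a sub-dataflow: an FCM delivered to each head operator triggers an epoch marker that propagates with alignment inside $\mathcal{C}_j$, partitioning the tuples flowing through $\mathcal{C}_j$ into those processed with the old configuration and those processed with the new one. By Lemma~\ref{lemma:hybrid-component}, the data transaction of any source tuple $t$ overlaps with at most one component of $G'$, so every conflict between such a data transaction and $U$ is confined to a single $\mathcal{C}_j$. This lets me resolve each data transaction's position relative to $U$ locally, without interference between components.

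Given a schedule $S$ produced by Algorithm~\ref{alg:hybrid-reconfig}, I would construct a serial schedule $S'$ as follows. For each data transaction $T$ of a source tuple $t$: (a) if the scope of $t$ contains no operator from any component of $G'$, then $T$ has no conflicting operation with $U$ and can be placed before $U$ arbitrarily; (b) otherwise, by Lemma~\ref{lemma:hybrid-component} there is a unique component $\mathcal{C}_j$ that $T$'s operators meet, and within $\mathcal{C}_j$ each reconfigured operator applies its new configuration exactly when its aligned epoch marker arrives, so all of $T$'s data operations on $\mathcal{C}_j$-operators are either uniformly before or uniformly after the corresponding function-update operations of $U$ in $S$. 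Place $T$ before or after $U$ in $S'$ accordingly. Finally, order the data transactions placed before $U$ by the order of their first data operations in $S$, and likewise for those after $U$; since distinct data transactions share no conflicting operations, this ordering is unconstrained.

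To verify conflict-equivalence I would inspect every conflicting pair: such a pair $(\phi(s,o),\mu(o))$ involves an operator $o\in M$, which lies in some component $\mathcal{C}_j$, and the placement rule above guarantees $\phi(s,o)$ and $\mu(o)$ have the same relative order in $S'$ as in $S$. Since there are no conflicts between distinct data transactions, and since operations of $U$ on different operators are mutually independent by Definition~\ref{def:function-update-transaction}, no further ordering constraints arise, and $S$ is conflict-equivalent to the serial schedule $S'$.

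The main obstacle is justifying the ``uniformly before or after'' claim within a single component. This reduces to arguing that the epoch marker with alignment propagated inside $\mathcal{C}_j$ behaves like the global EBR marker: the derivatives of $t$ that enter $\mathcal{C}_j$ must uniformly precede or uniformly follow the marker at every reconfigured operator in $\mathcal{C}_j$. In one-to-one dataflows the scope of $t$ forms a single chain, so there is a unique entry point into $\mathcal{C}_j$; combined with FIFO channel delivery and epoch-marker alignment, this ensures that a tuple's pre/post-marker status at one reconfigured operator is preserved at every downstream reconfigured operator in $\mathcal{C}_j$. Once this invariant is established, the rest of the argument is a direct adaptation of the proof of Lemma~\ref{lemma:epoch-multi} to each component in turn.
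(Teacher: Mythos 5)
Your proposal is correct and follows essentially the same route as the paper's own proof: it invokes Lemma~\ref{lemma:hybrid-component} to confine each data transaction to at most one MCS component, places each transaction before or after $U$ by comparing its tuple to that component's epoch marker, and orders same-side transactions by their first data operations. You actually spell out the ``uniformly before or after the marker'' invariant (via the chain structure of one-to-one scopes, FIFO delivery, and marker alignment) in more detail than the paper, which simply asserts it.
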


\begin{proof}
Let $S$ be a produced schedule. We can construct a serial schedule $S'$ using the following steps. Consider the function-update transaction $U$ and each data transaction $T$ for a tuple $t$. Based on Lemma~\ref{lemma:hybrid-component}, $T$ can be in only one of the following two cases. (1) Operators in $T$ do not overlap with any component in the MCS. In this case, $T$ does not have any conflict with $U$. We can place $T$ before $U$ in $S'$. (2) Operators in $T$ overlap with one component in the MCS. In this case, we place $T$ in $S'$ 
by comparing the position of a tuple and the epoch marker of this component.  In both cases, for those data transactions before $U$, we order them in $S'$ following the order of their first data operations in $S$. For data transactions after $U$, we order them in $S'$ following the order of their first data operations in $S$. Notice that there are no conflicts between two data transactions.
The schedule $S$ is conflict-equivalent to the constructed serial schedule $S'$ because all conflicting pairs in $S$ have the same order in $S'$. Therefore, $S$ is conflict-serializable.
\end{proof}

The reconfiguration delay of the Fries scheduler is decided by the size of each MCS component, which is the number of edges in the component. Compared to the EBR scheduler, the FCMs sent to the head of each MCS component are not blocked by the processing of data by the upstream operators.
Within each MCS component, the Fries scheduler still relies on epoch markers. In the extreme case where the MCS covers the entire dataflow graph, the Fries scheduler essentially becomes the epoch-based scheduler, where the FCMs are sent to all source operators and the epoch markers need to be propagated through the entire dataflow.

\section{Dataflows with One-to-Many Operators}
\label{sec:complex-extension}
In this section we consider dataflows with one-to-many operators.

\subsection{Challenges}
\label{subsec:one-to-many}

Figure~\ref{fig:extension-example-complex} shows a part of a dataflow with a one-to-many {\sf Join} operator, which joins each input tuple with the {\sf Merchants} table. 
When a tuple contains purchases from multiple merchants, {\sf Join} generates multiple output tuples. For instance, the tuple $t_1$ joins with three merchants and produces the tuples $t_2$, $t_3$, and $t_4$. 
The {\sf Split} operator splits the stream based on merchant information and sends different tuples to the two merchant fraud-detector operators $FMX$ and $FMY$. The prediction results are combined by a {\sf Union} operator.

\begin{figure}[htbp]
	\includegraphics[width=\linewidth]{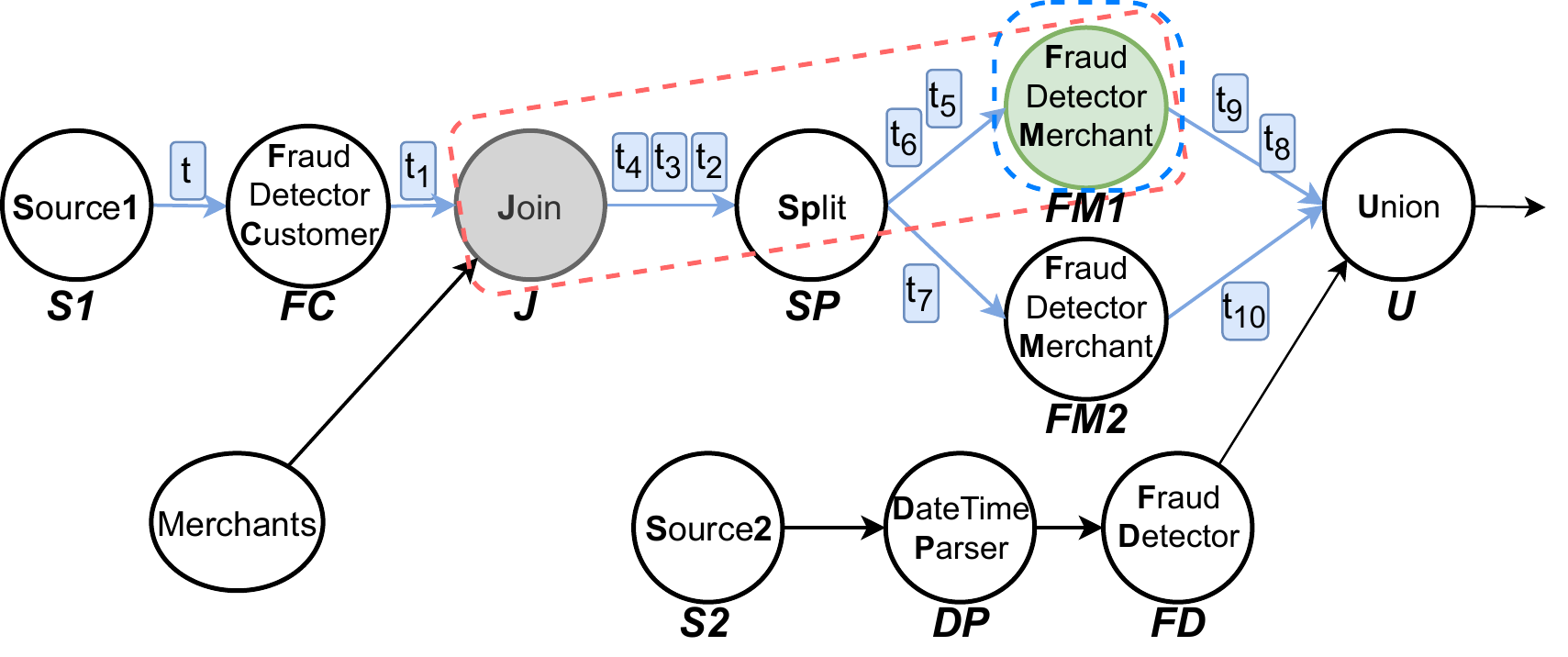}
	\caption{\label{fig:extension-example-complex}
		\textbf{Reconfiguration of operator $FM1$ in a dataflow with a one-to-many {\sf Join} operator. An incorrect MCS generated by Algorithm~\ref{alg:hybrid-reconfig} is highlighted in blue. The correct MCS generated by Algorithm~\ref{alg:hybrid-reconfig-extended} is highlighted in red.} 
	}
\end{figure}

Based on Definition~\ref{def:data-transaction}, the source tuple $t$ has the following data transaction $T_5$. 
\begin{multline*}
\Phi \; in \; T_5: \{\phi(FC, t), \phi(J, t_1), \phi(SP, t_2), \phi(SP, t_3),  \phi(SP, t_4), \\ \phi(FMX, t_5), \phi(FMX, t_6), \phi(FMY, t_7),  \phi(U_1, t_8), \phi(U_1, t_9), \phi(U_1, t_{10}) \}. 
\end{multline*}

We use an example to show that when reconfiguring a dataflow with one-to-many operators, a naive adoption of the \sysname scheduler in Algorithm~\ref{alg:hybrid-reconfig} can produce a non-conflict-serializable schedule.
Consider a reconfiguration of operator $FMX$ in Figure~\ref{fig:extension-example-complex}. Algorithm~\ref{alg:hybrid-reconfig} adds the only reconfiguration operator $FMX$ to the set $M$ and computes the MCS with one component, which contains the operator $FMX$ and no other edges. Algorithm~\ref{alg:hybrid-reconfig} ignores the {\sf Join} operator because it is not in the reconfiguration.
The method sends an FCM to $FMX$. This operator does not propagate the FCM to its downstream operators because it is the only operator in the MCS component.
Suppose the FCM sent to operator $FMX$ arrives {\em after} the tuple $t_5$ and before the tuple $t_6$ in the same transaction. Then this scheduler produces the following schedule with a total order of the data operations and the function-update operations:
\begin{multline*}
S_5: [\phi(FC, t), \phi(J, t_1), \phi(SP, t_2), \phi(SP, t_3),  \phi(SP, t_4), \boldsymbol{\phi(FMX, t_5)},  \\ \boldsymbol{\mu(FD_1)}, \boldsymbol{\phi(FMX, t_6)}, \phi(FMY, t_7),  \phi(U, t_8), \phi(U, t_9), \phi(U, t_{10}) ].
\end{multline*}

We can show that the schedule $S_5$ is not conflict-serializable. Intuitively, as indicated in the operations in bold, tuple $t_5$ is processed by $FMX$ with the old configuration, and tuple $t_6$ in the same transaction is processed by $FD_2$ with the new configuration.

\subsection{Extending the \sysname scheduler}
\label{subsec:one-to-many-extension}

We extend the \sysname scheduler Algorithm~\ref{alg:hybrid-reconfig} to produce a conflict-serializable schedule for a dataflow with one-to-many operators and a function-update transaction.
Intuitively, for a one-to-many operator, each of its descendant operators could receive multiple input tuples that belong to the same data transaction. 
In Figure~\ref{fig:extension-example-complex}, operator $SP$ receives three tuples ($t_2$, $t_3$, and $t_4)$, and operator $FMX$ receives two tuples ($t_5$ and $t_6$) in the same data transaction.

Consider a reconfiguration that includes the operator $FD_1$. The function-update operation $\mu(FD_1)$ can be conflicting with the data operations of tuples $t_5$ and $t_6$ (in the same data transaction) in the same operator.
To guarantee a conflict-serializable schedule, these two data operations must synchronize with $\mu(FMX)$ to ensure that both data operations are either before $\mu(FMX)$ or after $\mu(FMX)$. In other words, $\mu(FMX)$ cannot be scheduled in the middle of these two data operations. Notice that the {\sf Join} operator is the earliest ancestor one-to-many operator of the reconfiguration operator $FMX$. 
If an FCM is sent to an operator $O$ after the {\sf Join} operator, since the operator $O$ could possibly generate multiple data operations for the same data transaction, the FCM can be injected in the middle of these data operations, causing the schedule to be not conflict-serializable.
Based on these observations, to guarantee the conflict-serializability, we can start the synchronization from the {\sf Join} operator using an epoch marker. 
Recall that the \sysname scheduler starts the epoch marker propagation from the head operators of a component in the MCS. The MCS is constructed using a set of operators $M$, which includes the reconfiguration operator $FMX$.
To make sure the {\sf Join} operator is treated as a head operator in a component, we add the operator to $M$ before computing the MCS.

\begin{algorithm}
\caption{The \sysname Scheduler (for general dataflows with one-to-many operators)}\label{alg:hybrid-reconfig-extended}
\algrenewcommand\algorithmicrequire{\textbf{Input:}}
\algrenewcommand\algorithmicensure{\textbf{Output:}}
\begin{algorithmic}[1]
\Require{A dataflow $G=(V,E)$}
\Require{A reconfiguration $\mathcal{R} = \{ (o_1,U_1),\ldots, (o_n,U_n) \}$}
    \State $M = \{o_1, \ldots, o_n\}$ \label{alg:hybrid:line:mcs-init}
    \For {each reconfiguration operator $o_i$ in $\{o_1, \ldots, o_n\}$} \tikzmark{pa} \label{alg:hybrid:line:one-to-many-start}
        \State $\mathcal{A} \leftarrow$ set of ancestor one-to-many operators of $o_i$
        \State $\mathcal{E} \leftarrow computeEarliestAncestors(\mathcal{A}$)
        \State $M \leftarrow M \cup \mathcal{E}$
        \label{alg:hybrid:line:one-to-many-body}
    \EndFor \tikzmark{pb} 
    
    \State \ldots same as Algorithm~\ref{alg:hybrid-reconfig} line \ref{alg:hybrid:line:find-mcs-end}-\ref{alg:hybrid:line:epoch-end}

\end{algorithmic}
\begin{tikzpicture}[remember picture,overlay]
\draw[black,rounded corners]
  ([shift={(-205pt,2ex)}]pic cs:pa) 
    rectangle 
  ([shift={(215pt,-0.65ex)}]pic cs:pb);
\end{tikzpicture}
\end{algorithm}

Algorithm~\ref{alg:hybrid-reconfig-extended} shows the extended \sysname scheduler, with the  part in the box showing the differences compared to the original \sysname scheduler in Algorithm~\ref{alg:hybrid-reconfig}. 
When constructing the MCS, apart from adding the operators in the reconfiguration to $M$ (line~\ref{alg:hybrid:line:mcs-init}), we also add to $M$ all the earliest one-to-many ancestor operators of each reconfiguration operator $o_i$ (lines~\ref{alg:hybrid:line:one-to-many-start} to~\ref{alg:hybrid:line:one-to-many-body}). This step is done by first finding the set of ancestor one-to-many operators of $o_i$, denoted as $\mathcal{A}$, then finding the earliest ones in $\mathcal{A}$.
Notice that a reconfiguration operator could have more than one earliest ancestor one-to-many operator. For example, in Figure~\ref{fig:extension-example-complex}, suppose the operators $FMX$ and $FMY$ are the only one-to-many operators in the dataflow. Then the reconfiguration operator $U$ has both $FMX$ and $FMY$ as its earliest ancestor one-to-many operators according to the partial order of the DAG.
We do the modification in the box because we want to start the synchronization from these one-to-many operators with the reconfiguration operators using epoch markers. The remaining steps are the same as in Algorithm~\ref{alg:hybrid-reconfig}.

As an example, in Figure~\ref{fig:extension-example-complex}, the only one-to-many operator is the {\sf Join} operator $J$. Because the reconfiguration operator $FMX$'s earliest ancestor one-to-many operator is $J$, we add $J$ to $M$ when constructing the MCS. The resulting MCS includes a single component with operators $J$, $SP$, and $FMX$, together with their edges. The controller injects an FCM to operator $J$, which propagates an epoch marker within the component to operator $FMX$.

We show that the extended \sysname scheduler still guarantees conflict-serializability of its produced schedule.

\begin{lemma}\label{lemma:hybrid-mcs-one-to-many}
(Corresponding to Lemma~\ref{lemma:mcs-component}.) Consider a dataflow graph $G$ with a reconfiguration $\mathcal{R}$, and the MCS $G'$ generated by Algorithm~\ref{alg:hybrid-reconfig-extended}.
Each component of $G'$ contains at least one reconfiguration operator.
\end{lemma}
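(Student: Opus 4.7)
The plan is to reduce the claim to an analogue of Lemma~\ref{lemma:mcs-component} applied to the expanded set $M$ used in Algorithm~\ref{alg:hybrid-reconfig-extended}, and then to argue separately that the ``extra'' operators thrown into $M$ (the earliest ancestor one-to-many operators) cannot occupy a component on their own.

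First, I would observe that the argument in the proof of Lemma~\ref{lemma:mcs-component} does not depend on what $M$ consists of, only on the fact that $G'$ is a minimal covering sub-DAG of $G$ with respect to $M$. Thus, by the same contradiction argument (removing all vertices and edges of a hypothetical vertex-of-$M$-free component would still yield a covering sub-DAG for $M$, violating the minimality property (3) of Definition~\ref{def:minimal-covering-sub-dag}), every component of $G'$ generated by Algorithm~\ref{alg:hybrid-reconfig-extended} contains at least one operator from $M$.

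Second, I need to upgrade ``at least one operator from $M$'' to ``at least one reconfiguration operator.'' Let $\mathcal{C}$ be an arbitrary component of $G'$, and suppose for contradiction that $\mathcal{C}$ contains no reconfiguration operator. Then every operator of $M$ in $\mathcal{C}$ is an earliest ancestor one-to-many operator added in lines~\ref{alg:hybrid:line:one-to-many-start}--\ref{alg:hybrid:line:one-to-many-body}. Pick any such operator $J \in \mathcal{C} \cap M$. By its construction, $J$ is an ancestor of some reconfiguration operator $o_i$, so there is a directed path from $J$ to $o_i$ in $G$. Since both $J$ and $o_i$ lie in $M$, property (2) of Definition~\ref{def:minimal-covering-sub-dag} forces the entire path (all its vertices and edges) to be in $G'$. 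Hence $J$ and $o_i$ are connected in $G'$ even after ignoring edge directions, meaning $o_i \in \mathcal{C}$, contradicting our assumption that $\mathcal{C}$ contains no reconfiguration operator.

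The main obstacle is not technical but rather bookkeeping: one must keep the two roles of $M$ clearly separated, since $M$ now mixes reconfiguration operators with one-to-many ancestors that were added only to trigger the epoch marker earlier. The MCS minimality argument delivers ``some element of $M$ sits in every component,'' while the ancestor-path argument via property (2) of the MCS delivers ``every added one-to-many ancestor shares a component with one of the original reconfiguration operators.'' Chaining these two facts yields the stated conclusion.
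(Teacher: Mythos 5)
Your proof is correct and follows essentially the same route as the paper: first apply the minimality argument of Lemma~\ref{lemma:mcs-component} to the expanded set $M$, then show each added earliest one-to-many ancestor shares a component with the reconfiguration operator it was added for. Your justification of the second step via property (2) of Definition~\ref{def:minimal-covering-sub-dag} (the ancestor-to-descendant path between two members of $M$ must lie entirely in $G'$) is in fact more explicit than the paper's one-line appeal to ``the definition of components.''
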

\begin{proof}
Let $M$ be the set of operators used in Algorithm~\ref{alg:hybrid-reconfig-extended} to compute the MCS in line~\ref{alg:hybrid:line:mcs-init}. Using steps similar to those in Lemma~\ref{lemma:mcs-component}, we can show each component contains at least one operator in $M$.
By the construction in Algorithm~\ref{alg:hybrid-reconfig-extended}, an operator $P$ in $M$ is either (1) a reconfiguration operator, or (2) an earliest one-to-many ancestor operator of a reconfiguration operator $O$.
In the latter case, by the construction in Algorithm~\ref{alg:hybrid-reconfig-extended}, $O$ is also in $G'$. By the definition of components, $O$ is in the same component as $P$. Therefore, in both cases, each component of $G'$ contains at least one reconfiguration operator.
\end{proof}

\begin{lemma} 
\label{lemma:hybrid-component-extended}
(Corresponding to Lemma~\ref{lemma:hybrid-component}.) Consider a dataflow graph $G$ with a reconfiguration $\mathcal{R}$, and the MCS $G'$ generated by Algorithm~\ref{alg:hybrid-reconfig-extended}. For each source tuple, the operators in its data transaction overlap with at most one component of $G'$.
\end{lemma}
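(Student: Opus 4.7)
The plan is to mimic the contradiction argument used for Lemma~\ref{lemma:hybrid-component}, but adapt it to the fact that with one-to-many operators, the data transaction of a source tuple $t$ forms a tree of tuples (rooted at $t$) rather than a chain. Suppose for contradiction that the operators in $t$'s data transaction overlap with two distinct components $\mathcal{C}_1, \mathcal{C}_2$ of $G'$. By Lemma~\ref{lemma:hybrid-mcs-one-to-many}, I can pick reconfiguration operators $O_1 \in M \cap \mathcal{C}_1$ and $O_2 \in M \cap \mathcal{C}_2$, each processing at least one tuple in the scope of $t$. If $O_1$ is an ancestor of $O_2$ in $G$ (or vice versa), the minimality clause of Definition~\ref{def:minimal-covering-sub-dag} forces the entire connecting path into $G'$, making $O_1$ and $O_2$ co-component, which is a contradiction. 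So I may assume neither is an ancestor of the other.

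In this case I would look at the tuple tree of $t$'s scope from Definition~\ref{def:tuple-scope}. Pick tuples $s_1, s_2 \in \mathcal{S}$ processed by $O_1$ and $O_2$ respectively, and take their lowest common ancestor tuple $s^*$ in the tree. There is a single processing of some operator $D$ that consumed $s^*$ and emitted two descendants $s_1', s_2'$ on the branches leading to $s_1$ and $s_2$. A one-to-one operator emits at most one output per input, so $D$ must be a one-to-many operator, and because $s_1', s_2'$ flow to $O_1, O_2$ along edges of $G$, the operator $D$ is an ancestor of both $O_1$ and $O_2$ in $G$.

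Next I would chase $D$ upward in $G$ along one-to-many ancestors to obtain an operator $E$ with no one-to-many ancestor in $G$; such $E$ exists since $G$ is a finite DAG, taking $E = D$ when $D$ already has no one-to-many ancestor. Then $E$ is an ancestor of $O_1$ in $G$ and is minimal in the set of one-to-many ancestors of $O_1$, so $E \in \mathcal{E}_{O_1} \subseteq M$ by lines~\ref{alg:hybrid:line:one-to-many-start}--\ref{alg:hybrid:line:one-to-many-body} of Algorithm~\ref{alg:hybrid-reconfig-extended}. Because $E$ is an ancestor of $D$, which is itself an ancestor of $O_2$, $E$ is also an ancestor of $O_2$. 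Applying Definition~\ref{def:minimal-covering-sub-dag} to the pairs $(E, O_1)$ and $(E, O_2)$, both lying in $M$, places the connecting paths into $G'$, so $E$, $O_1$, and $O_2$ all sit in a single component of $G'$, contradicting $O_1 \in \mathcal{C}_1 \neq \mathcal{C}_2 \ni O_2$.

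The step I expect to be the main obstacle is pinning down the divergence operator $D$: one must argue that divergence happens at a single processing of one operator, not merely at a vertex where two unrelated tuples in the transaction happen to pass through the same operator at different times. Using the tuple tree and the lowest-common-ancestor tuple $s^*$ makes this precise, since the two outgoing branches at $s^*$ come from one input to $D$ and therefore expose the one-to-many behavior required by Definition~\ref{def:one-to-many-op}. A secondary subtlety is the equivalence between \emph{earliest one-to-many ancestor of $O_1$} (as computed in the algorithm) and \emph{one-to-many ancestor of $O_1$ with no one-to-many ancestor of its own in $G$}; this holds because any one-to-many ancestor of such an $E$ would automatically also be a one-to-many ancestor of $O_1$.
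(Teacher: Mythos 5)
Your proof is correct and follows essentially the same route as the paper's: a contradiction argument that uses the tree structure of the tuple scope, takes the lowest-common-ancestor tuple of the two branches, and identifies the operator processing it as a one-to-many common ancestor of the two reconfiguration operators. The only cosmetic difference is that the paper argues this divergence operator itself lies in $G'$ (as an earliest one-to-many ancestor of $A$ or as an interior vertex on a covered path) and connects $A$ and $B$ through it directly, whereas you hoist up to an earliest one-to-many ancestor $E \in M$ and apply the covering clause of Definition~\ref{def:minimal-covering-sub-dag} to the pairs $(E,O_1)$ and $(E,O_2)$; both close the argument the same way.
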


\begin{proof}
Suppose the operators processing a tuple $t$ overlap with two components $\mathcal{C}_1$ and $\mathcal{C}_2$ in the MCS. 
By Lemma~\ref{lemma:hybrid-mcs-one-to-many}, there is a reconfiguration operator $A$ in $\mathcal{C}_1$ and another reconfiguration $B$ in $\mathcal{C}_2$.
Let $(\mathcal{S}, \preccurlyeq_\mathcal{S})$ be the scope of $t$. Let $t_A$ and $t_B$ be two tuples (in $\mathcal{S}$) processed by operators $A$ and $B$, respectively.
Notice that the partial order $\preccurlyeq_\mathcal{S}$ of the scope forms a tree.
Let $t_L$ be the latest common ancestor tuple of $t_A$ and $t_B$ in the tree. By the definition of the scope $(\mathcal{S}, \preccurlyeq_\mathcal{S})$, the receiving operator $L$ of $t_L$ must be a one-to-many operator because there is more than one child of $t_L$ in the tree.
Notice that $L$ is a common ancestor of $A$ and $B$ because it has paths to both operators in $G$.
For operator $A$, by the construction in Algorithm~\ref{alg:hybrid-reconfig-extended}, $L$ is either (1) an earliest one-to-many operator of $A$, or (2) on the path between $A$ and an earliest one-to-many operator of $A$. In both cases, $L$ is in $G'$. 
By the definition of components, since $L$ is in $G'$ and $L$ is connected to both $A$ and $B$, $A$ and $B$ must be in the same component of $G'$, which contradicts the assumption.
\end{proof}

\begin{lemma} 
\label{lemma:hybrid-component-head}
Consider a dataflow graph $G$ with a reconfiguration $\mathcal{R}$, and the MCS $G'$ generated by Algorithm~\ref{alg:hybrid-reconfig-extended}. A head operator $H$ in a component of $G'$ receives at most one input tuple.
\end{lemma}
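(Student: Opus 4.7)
The plan is to reduce the statement to a structural claim about the MCS: any head operator $H$ of a component of $G'$ must itself lie in the seed set $M$ used by Algorithm~\ref{alg:hybrid-reconfig-extended}, and such an $H$ can have no one-to-many ancestor in the original dataflow $G$. Once these two facts are in hand, the bound on input tuples follows from the tree-shaped scope of a source tuple (as used in the proof of Lemma~\ref{lemma:hybrid-component-extended}) together with the observation that a node in that tree branches into multiple children only when its receiving operator is one-to-many.

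First I would show $H \in M$. Assume for contradiction $H \in V' \setminus M$. By property~(2) of Definition~\ref{def:minimal-covering-sub-dag}, every vertex of $G'$ that is not in $M$ lies on a path in $G$ whose endpoints are both in $M$; let $P$ be $H$'s immediate predecessor on such a path. The same property forces $P$ and the edge $P \to H$ into $G'$, so $P$ and $H$ sit in the same component, giving $H$ an incoming edge within its component and contradicting head-ness. Next I would show that no ancestor of $H$ in $G$ is one-to-many, splitting on the two ways $H$ can land in $M$ by the construction of Algorithm~\ref{alg:hybrid-reconfig-extended}. If $H$ was added as an earliest one-to-many ancestor of some reconfiguration operator $o_i$, then any one-to-many ancestor of $H$ would also be an ancestor of $o_i$ and strictly earlier than $H$, contradicting the choice of $H$. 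If instead $H$ is itself a reconfiguration operator, suppose it has a one-to-many ancestor and pick an earliest such $L$; Algorithm~\ref{alg:hybrid-reconfig-extended} then inserts $L$ into $M$, so by Definition~\ref{def:minimal-covering-sub-dag} the entire $L \to H$ path, including $H$'s predecessor on it, lies in $G'$ and in the same component as $H$, again contradicting head-ness.

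To conclude, fix a source tuple $t$ and its scope $(\mathcal{S},\preccurlyeq_\mathcal{S})$, which forms a tree rooted at $t$. Since every ancestor of $H$ in $G$ is one-to-one, every node of this tree whose receiving operator is an ancestor of $H$ has at most one child; hence the subtree lying strictly above the tuples with receiving operator $H$ is a chain, and at most one element of $\mathcal{S}$ is received by $H$. The main obstacle is the reconfiguration-operator case in the second step: it is tempting to assert directly that a head reconfiguration operator cannot sit below a one-to-many operator, but this really relies on the closure of $M$ under earliest one-to-many ancestors combined with property~(2) of the MCS, which together drag a witnessing predecessor into the same component and force the contradiction.
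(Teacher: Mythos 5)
Your proof is correct and follows essentially the same route as the paper's: first establish $H \in M$, then argue that $H$ has no one-to-many ancestor (via the earliest-ancestor choice in one case, and via the closure of $M$ under earliest one-to-many ancestors forcing a predecessor into $H$'s component in the other), and conclude from the tree-shaped scope of a source tuple; your treatment of the second step is in fact more explicit than the paper's, which simply asserts it. The only quibble is that your first step attributes to property~(2) of Definition~\ref{def:minimal-covering-sub-dag} the claim that every non-$M$ vertex of $G'$ lies on a path between two $M$-vertices --- property~(2) only forces such path vertices \emph{into} $G'$, and the claim you need really follows from minimality (property~(3)), which is how the paper argues it (by deleting $H$ and contradicting minimality).
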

\begin{proof}
Let $M$ be the set of operators used in~\ref{alg:hybrid-reconfig-extended} to compute the MCS. Suppose a head operator $H$ of a component in $G'$ is not in $M$. We construct a new sub-DAG $G''$ by removing $S$ and its edges from $G'$. Using similar steps as in Lemma~\ref{lemma:mcs-unique}, we can show that $G''$ is still a minimal covering sub-DAG of $G$ and $M$, which contradicts the minimality property of $G'$.

By the construction in Algorithm~\ref{alg:hybrid-reconfig-extended}, operator $H$ is either (1) a reconfiguration operator with no ancestor one-to-many operators, or (2) an earliest one-to-many operator of a reconfiguration operator, which also has no ancestor one-to-many operators. Therefore, $H$ can receive at most one input tuple in $T$.
\end{proof}

\begin{theorem}
\label{theorem:hybrid-method-extended}
(Corresponding to Theorem~\ref{theorem:hybrid-method}.) For a workflow possibly with one-to-many operators and a reconfiguration request, Algorithm~\ref{alg:hybrid-reconfig-extended} always produces a conflict-serializable schedule.
\end{theorem}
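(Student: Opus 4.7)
The plan is to mirror the proof of Theorem~\ref{theorem:hybrid-method}, reusing the now-extended structural lemmas about the MCS. Given a produced schedule $S$, I will construct a conflict-equivalent serial schedule $S'$ in which the function-update transaction $U$ is placed between two blocks of data transactions. For each data transaction $T$ of a source tuple $t$, Lemma~\ref{lemma:hybrid-component-extended} guarantees that the operators touched by $T$ overlap with at most one component $\mathcal{C}$ of the MCS $G'$. If there is no overlap, then $T$ has no conflict with $U$ and I can freely place $T$ before $U$ in $S'$. Otherwise, $T$ overlaps with exactly one component $\mathcal{C}$, and I must decide whether to place $T$ before or after $U$ based on the propagation of $\mathcal{C}$'s epoch marker inside $\mathcal{C}$.

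The core step, and the main obstacle compared with Theorem~\ref{theorem:hybrid-method}, is that in the presence of one-to-many operators a single data transaction $T$ can contribute several tuples to one operator inside $\mathcal{C}$ (as with $t_5$ and $t_6$ at $FMX$ in Figure~\ref{fig:extension-example-complex}). I therefore need to show that the epoch marker of $\mathcal{C}$ does not split $T$'s operations on any operator in $\mathcal{C}$. The key leverage is Lemma~\ref{lemma:hybrid-component-head}: every head operator $H$ of $\mathcal{C}$ receives at most one input tuple from $T$. Consequently, at each head $H$ the unique input tuple from $T$ is unambiguously either before or after the FCM that $H$ receives from the controller. I will then push this decision through $\mathcal{C}$ by induction on the topological order of $\mathcal{C}$: using FIFO input channels and the marker-alignment rule at each non-head operator in $\mathcal{C}$, any tuple generated from $T$'s single input at $H$ reaches every downstream operator in $\mathcal{C}$ on the same side of the component's epoch marker as that input. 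Thus all of $T$'s conflicting data operations inside $\mathcal{C}$ sit consistently before (resp.\ after) the function-update operations $\mu(o_i)$ performed at operators in $\mathcal{C}$; I place $T$ before (resp.\ after) $U$ in $S'$ accordingly.

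After positioning every data transaction relative to $U$, I will order the data transactions that end up before $U$ in $S'$ by the order of their first data operations in $S$, and likewise for those after $U$; since distinct data transactions share no conflicting operations, this ordering is free. Finally I will argue conflict-equivalence: every conflict in $S$ is between some $\phi \in T$ and some $\mu(o_i) \in U$ on a common operator $o_i$, which by construction of $M$ in Algorithm~\ref{alg:hybrid-reconfig-extended} lies in some component $\mathcal{C}$ with $T$ overlapping $\mathcal{C}$, and by the previous paragraph the relative order of $\phi$ and $\mu(o_i)$ in $S$ agrees with the $T$-vs-$U$ order chosen in $S'$. Hence $S$ is conflict-equivalent to the serial schedule $S'$, and therefore conflict-serializable. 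The only delicate piece is really the inductive argument on $\mathcal{C}$, so I would state it as a short internal claim (``no data operation of $T$ at an operator in $\mathcal{C}$ crosses the epoch marker'') and prove it explicitly before assembling $S'$.
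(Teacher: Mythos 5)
Your proposal takes essentially the same route as the paper's proof: both invoke Lemma~\ref{lemma:hybrid-component-extended} to confine each data transaction $T$ to at most one MCS component and Lemma~\ref{lemma:hybrid-component-head} to get a single input tuple of $T$ at each head operator $H$, and then position $T$ relative to $U$ in the serial schedule by comparing that tuple with the epoch marker at $H$. Your explicit internal claim (that the marker, propagated with alignment through FIFO channels, cannot split $T$'s operations at any operator inside the component) is precisely the step the paper leaves implicit by deferring to ``steps similar to those in the proof of Theorem~\ref{theorem:hybrid-method},'' so your write-up is the same argument spelled out in slightly more detail.
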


\begin{proof}
Consider the function-update transaction $U$ of a reconfiguration $\mathcal{R}$ in the algorithm and the data transaction $T$ for a source tuple $t$. Lemma~\ref{lemma:hybrid-component-extended} shows that the operators in $T$ overlap with at most one component of the MCS $G'$ produced in the algorithm.
Consider a head operator $H$ in a component of $G'$. 
Lemma~\ref{lemma:hybrid-component-head} shows that a head operator $H$ in a component of $G'$ can receive at most one input tuple in $T$. 
We compare the position of the epoch marker on $H$ with a possible single input tuple of $H$ to determine the position of $T$ in a serial schedule. We can prove this claim using steps similar to those in the proof of Theorem~\ref{theorem:hybrid-method}.
\end{proof}

\subsection{Reducing delay by MCS pruning}
\label{subsec:one-to-many-pruning}

For dataflows with one-to-many operators, the reconfiguration delay can be long when there are many intermediate operators between the head of an MCS component and a reconfiguration operator in the component. 
To address this limitation, we improve the \sysname scheduler in Algorithm~\ref{alg:hybrid-reconfig-extended} by using pruning rules to remove one-to-many operators that do not need to be synchronized.
Algorithm~\ref{alg:hybrid-reconfig-pruning} shows the addition of a pruning step. In line~\ref{alg:hybrid-plus:line:prune}, we call a function {\sf pruneAncestors} that applies pruning rules to each of the ancestor one-to-many operators to decide it can be pruned.

\begin{algorithm}
\caption{The \sysname Scheduler with a Pruning Process}\label{alg:hybrid-reconfig-pruning}
\algrenewcommand\algorithmicrequire{\textbf{Input:}}
\algrenewcommand\algorithmicensure{\textbf{Output:}}
\begin{algorithmic}[1]
    \State $M = \{o_1, \ldots, o_n\}$
    \For {each reconfiguration operator $o_i$ in $\{o_1, \ldots, o_n\}$} \label{alg:hybrid-plus:line:one-to-many-start}
        \tikzmark{pa1}
        \State $\mathcal{A} \leftarrow$ set of ancestor one-to-many operators of $o_i$
        \State $\boldsymbol{pruneAncestors(\mathcal{A})}$         \label{alg:hybrid-plus:line:prune}
        \State $\mathcal{E} \leftarrow computeEarliestAncestors(\mathcal{A}$)
        \State $M \leftarrow M \cup \mathcal{E}$
        \label{alg:hybrid-plus:line:one-to-many-body}
    \tikzmark{pb2} \EndFor \label{alg:hybrid-plus:line:one-to-many-end}
    \State \ldots same as Algorithm~\ref{alg:hybrid-reconfig} line \ref{alg:hybrid:line:find-mcs-end}-\ref{alg:hybrid:line:epoch-end}
\end{algorithmic}
\begin{tikzpicture}[remember picture,overlay]
\draw[black,rounded corners]
  ([shift={(-205pt,2ex)}]pic cs:pa1) 
    rectangle 
  ([shift={(150pt,-0.5ex)}]pic cs:pb2);
\end{tikzpicture}
\end{algorithm}

Next, we introduce two pruning rules that are used in the improved \sysname scheduler.

\boldstart{1. Edge-wise one-to-one pruning rule.} Figure~\ref{fig:replicate-pruning}~(I) shows a part of a dataflow with a {\sf Replicate} operator, denoted as $RE$. This operator replicates each input tuple to produce two output tuples and sends each of them to operators $C$ and $D$. $RE$ is a one-to-many operator by Definition~\ref{def:one-to-many-op}. Suppose all other operators in this dataflow are one-to-one operators. Using Algorithm~\ref{alg:hybrid-reconfig-extended}, the \sysname scheduler includes operators $RE$, $C$, and $E$ in the MCS, as shown in the red box in Figure~\ref{fig:replicate-pruning}~(I). This is because $RE$ is the earliest one-to-many ancestor operator of the reconfiguration operator $E$.

\begin{figure}[htbp]
	\includegraphics[width=\linewidth]{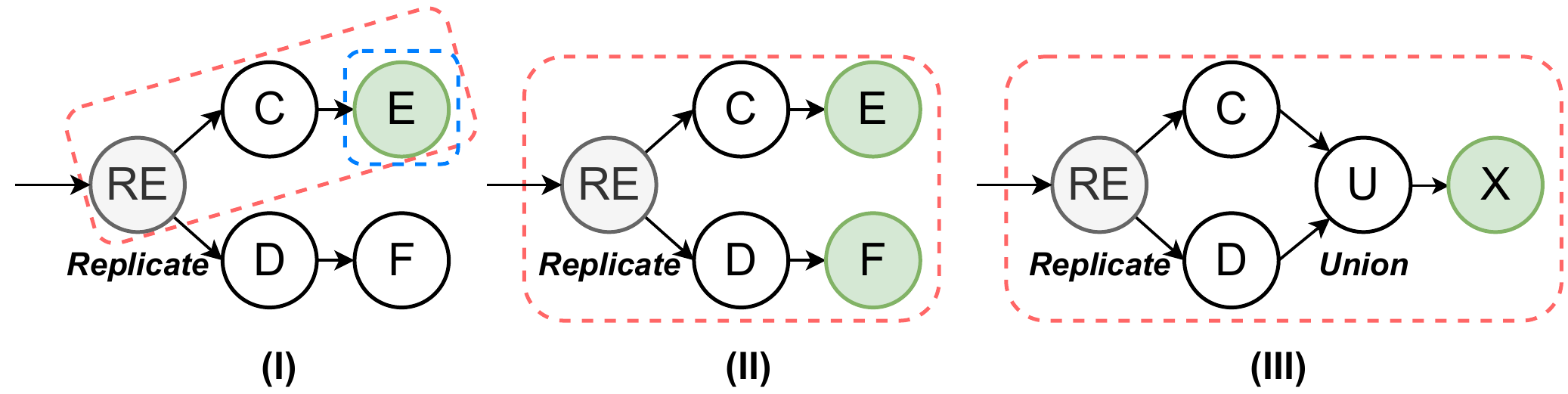}
	\caption{\label{fig:replicate-pruning}
		\textbf{Example reconfigurations on dataflows with a replicate operator. (I): The MCS can be pruned. (II) and (III): the MCS's cannot be pruned. }
	}
\end{figure}

Although operator $RE$ is a one-to-many operator, for an input tuple, the operator outputs a single tuple on each edge. For the reconfiguration operator $E$, it only receives a single tuple in each data transaction. Therefore, there is no need for operator $E$ to synchronize with operator $RE$. The MCS only contains operator $E$, as shown in the blue box in Figure~\ref{fig:replicate-pruning}. 
Figure~\ref{fig:replicate-pruning}~(II) and (III) show dataflows where the MCS with a replicate operator cannot be pruned.  In Figure~\ref{fig:replicate-pruning}~(II), for each tuple processed by operator $E$, the corresponding replicated tuple must be processed by the same version of operator $F$. We can achieve the goal by starting the synchronization from $RE$. In Figure~\ref{fig:replicate-pruning}~(III), operator $X$ receives all the replicated tuples in each data transaction. Therefore we also need to start the synchronization from the one-to-many operator $RE$.

Next, we formally describe the pruning rule.  We prune an ancestor one-to-many operator $A$ of a reconfiguration operator $o_i$ if the following conditions are true. (1) On each of its output edges, $A$ emits at most one tuple for each input tuple.  (2) The $A$ has only one output edge $e$ connected to a downstream reconfiguration operator, and this output edge $e$ is connected to $o_i$.
Intuitively, condition (1) ensures that $A$ behaves like a one-to-one operator on each of its output edges. 
Condition (2) ensures that the reconfiguration transaction of $o_i$ affects only one output tuple of $A$ sent on edge $e$.
As analyzed in Section~\ref{subsec:one-to-many-extension}, a one-to-many operator $O$ needs to be included in the MCS to ensure multiple output tuples of $O$ are processed using the same configuration. In this case, only a single output tuple of $A$ is affected by the reconfiguration. Therefore, $A$ can be pruned from the set of operators used to construct the MCS.

\boldstart{2. Uniqueness pruning rule.}

Next, we show another example of pruning an one-to-many operator. In Figure~\ref{fig:self-join-pruning}, suppose we want to reconfigure operator $E$. Each input tuple is first replicated by operator $RE$. The replicated tuples are sent to operators $C$ and $D$. They are then combined to a single tuple using a {\sf Self-Join} operator $SJ$ on the primary key. 
Algorithm~\ref{alg:hybrid-reconfig-extended} computes the sub-DAG from operator $RE$ to operator $E$ as the MCS, as shown in the red box in Figure~\ref{fig:self-join-pruning}. However, notice that operator $SJ$ ensures that it generates at most one output tuple for input tuple from the source. Therefore, $RE$ does not need to be synchronized and the MCS only needs to contain $E$ without $RE$, as shown in the blue box.
In general, we prune an ancestor one-to-many operator $A$ of a reconfiguration operator $o_i$ if on each path from $A$ to $o_i$, there exists an operator $O$ that has the following uniqueness property: operator $O$ generates at most one output tuple for each data transaction. In the running example, $SJ$ is such an $O$ operator and $RE$ can be pruned.

\begin{figure}[htbp]
	\includegraphics[width=2.5in]{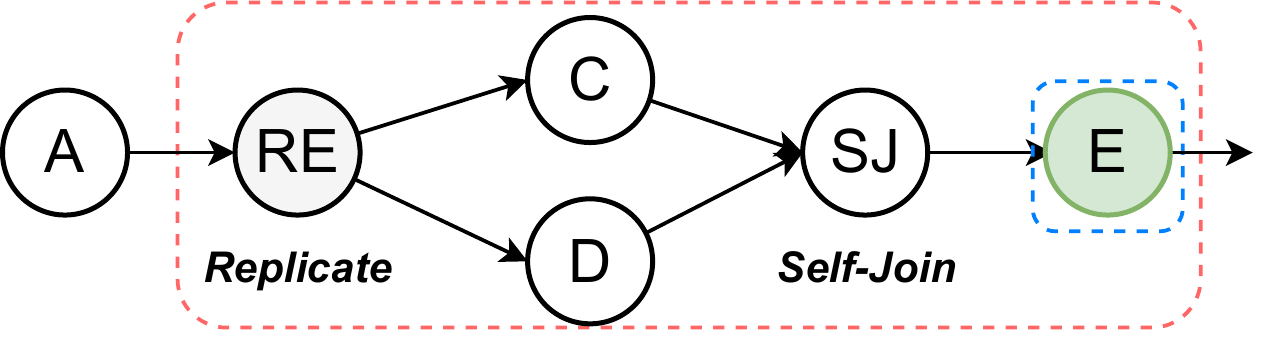}
	\caption{\label{fig:self-join-pruning}
		\textbf{Operator $RE$ can be pruned from the set of operators used to construct the MCS. }
	}
\end{figure}

\section{Extensions}
\label{sec:extensions}
In this section, we consider how the \sysname scheduler works in more general cases, including the case of workflows with blocking operators and the case where an operator has multiple workers. Moreover, we discuss how to support fault tolerance in the \sysname scheduler.

\subsection{Dataflows with Blocking Operators}
\label{subsec:extension-blocking}
We now consider how the \sysname scheduler works on dataflows containing blocking operators, such as aggregation and sort. Consider a blocking operator $B$. All operators before $B$ need to run to their completion before the operators after $B$ start to run.
In other words, the operators before $B$ and those after $B$ never execute at the same time.  Based on this observation, we can use the blocking operators in a dataflow to divide the dataflow into multiple sub-dataflows, with each of them containing pipelined operators only. Then we run \sysname on each sub-dataflow during its execution.

\subsection{Multiple Workers for an Operator}
\label{subsec:extension-parallel}

In a parallel execution engine, each operator can have multiple workers, with each worker processing a data partition. 
We map a single-worker dataflow $G=(V,E)$ to a parallel dataflow $G^*=(V^*,E^*)$, where each operator $v$ in $V$ is mapped to multiple parallel workers $v^1, \ldots, v^{p}$ in $V^*$, where $p$ is the number of workers of the operator. 
We map a reconfiguration $\mathcal{R}$ specified on the single worker dataflow $G$ to a new reconfiguration $\mathcal{R}^*$ on the parallel dataflow $G^*$ of $G$.
Figure~\ref{fig:parallel-example} shows part of a parallel dataflow based on Figure~\ref{fig:extension-example-complex}, where each operator runs using two workers.
For each function update $\mu(o_i)$ on an operator $o_i$ in $R$, we map it to a set of function updates on all the workers of $o_i$, i.e., $\{ (o^1_i,\mu(o_i)),\ldots, (o^{p}_i,\mu(o_i)) \}$ in $R^*$. For example, the reconfiguration on operator $FMX$ is mapped to a reconfiguration on the corresponding workers $FMX_1$ and $FMX_2$.

\begin{figure}[htbp]
	\includegraphics[width=2.8in]{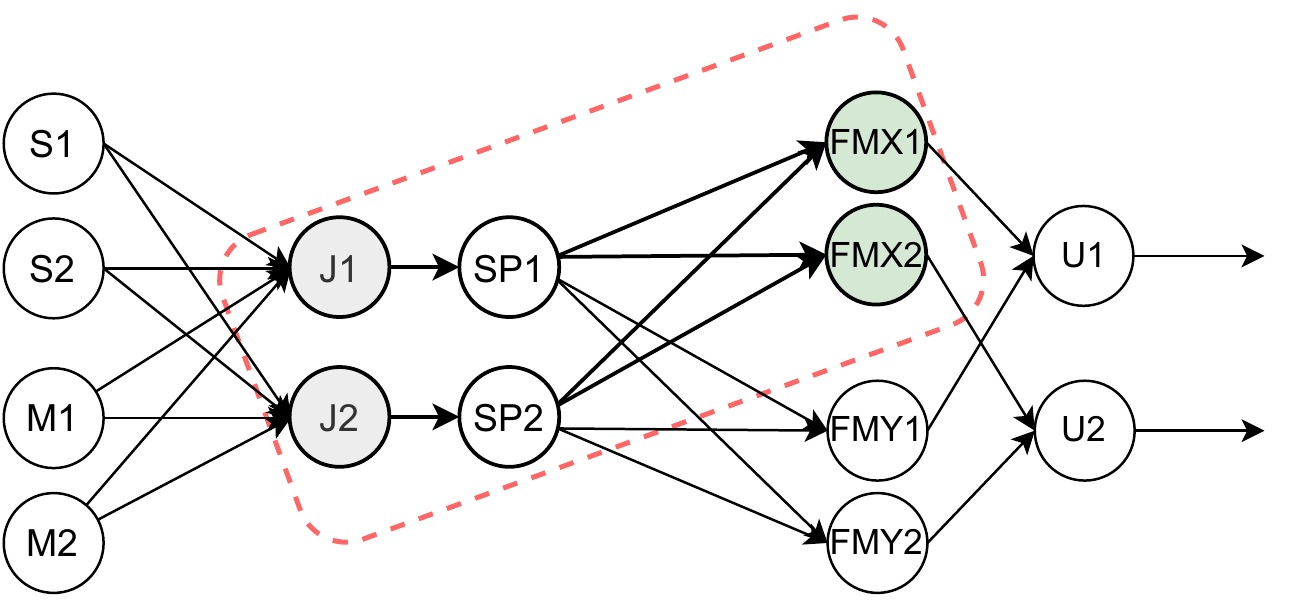}
	\caption{\label{fig:parallel-example}
		\textbf{A reconfiguration on a parallel dataflow with two workers per operator. }
	}
\end{figure}

Notice that the parallel dataflow $G^*$ is also a DAG. The \sysname scheduler in Algorithm~\ref{alg:hybrid-reconfig-pruning} can be directly run on $G^*$ with $\mathcal{R}^*$. 
The operators and edges in the generated MCS are highlighted in red in Figure~\ref{fig:extension-example-complex}
The \sysname scheduler treats a worker of an operator to have the same property (one-to-one or one-to-many) as the operator in hash and range partitioning.  For example, both workers of the {\sf Join} operator are treated as one-to-many operators. 
When using the broadcast strategy, a worker broadcasts an output tuple to all its downstream operators, same as the {\sf Replicate} operator described in Section~\ref{subsec:one-to-many-pruning}. In this case, the \sysname treats it as if a {\sf Replicate} operator is added after the worker. The pruning techniques described in Section~\ref{subsec:one-to-many-pruning} can still be used.

\subsection{Fault Tolerance Using the \sysname Scheduler}
\label{subsec:extension-fault-tolerance}

Fault tolerance requires that a system can recover to a consistent state in case of failures. For a dataflow $G$ and a reconfiguration $\mathcal{R}$, the execution of the dataflow is not in a consistent state if some operators in $\mathcal{R}$ are updated, and some operators in  $\mathcal{R}$ are not.
In an epoch-based scheduler, fault-tolerance can be supported using epoch-based checkpointing~\cite{journals/corr/CarboneFEHT15, journals/pvldb/CarboneEFHRT17}. 
However, such checkpointing cannot guarantee fault tolerance for the \sysname scheduler.
Consider the reconfiguration in Figure~\ref{fig:hybrid-min-subdag} and
the following sequence of events: (1) $G$ receives a checkpoint marker from $B$; (2) $G$ and $C$ receive the reconfiguration FCM's; and (3) $C$ receives a checkpoint marker from $A$. The checkpoint contains the old configuration of $G$ and the new configuration of $C$, which is not in a consistent state. Next we discuss two methods to support fault tolerance in \sysname.

\boldstart{Checkpoint-based fault tolerance.} 
When a reconfiguration request arrives at the controller, the controller cancels all in-flight checkpoints because they could produce inconsistent states. 
The controller then blocks any new checkpoints to be started until all head operators of each MCS component have received their FCM's.
In this way, the subsequent epoch markers will always be after the FCM's, thus the subsequent checkpoints only contain the fully updated configuration.
The blocking period is short because the FCM's are not blocked by any data messages.

\boldstart{Logging-based fault tolerance.} 
The FCMs introduce non-determinism in the execution of an operator. We can log all the non-determinism factors of each operator, including the arrival order of data tuples and the FCMs. During recovery, each operator is deterministically replayed and the FCMs are injected following the original order. We can leverage an existing logging-based fault-tolerance approach such as the one in Clonos~\cite{conf/sigmod/SilvestreFSK21}, which is built on top of Flink. FCMs can be modeled as RPC calls received by an operator in Clonos, which are recorded in the logs.

\section{Experiments}
\label{sec:experiments}

In this section, we present the results of experiments  of different reconfiguration schedulers and show the benefits of \sysname. 

\subsection{Setting}
\label{subsec:experiment-setting}

\boldstart{Datasets.} 
We used three datasets shown in Table~\ref{table:exp-datasets}. Dataset 1 had 24M tuples of credit card payments with 12 attributes~\cite{padhi2021tabular}, such as the customer, merchant, date, amount, and chip usage.
Dataset 2 was constructed by grouping the credit card payments per user in dataset 1.  Each record had a user and a list of payments by the user. We used this dataset to utilize a one-to-many {\sf unnest} operator to split a payment list into multiple records. 
Dataset 2 was generated using the TPC-DS benchmark~\cite{misc/tpcds} with a scale factor of 100.

\boldstart{Workflows.} We constructed workflows as shown in Figure~\ref{fig:exp-workflows}. Workflow ${W_1}$ simulated a fraud detection application, and it detected fraud of a user based on the user's historical payment amounts. By default, the source operator read the payment table with a rate of 1,000 tuple/s. The user-based inference operator saved 10 most recent payment amounts for each user as its internal state. For each input tuple, the operator updated the user's state and used an LSTM auto-encoder~\cite{wiese2009credit} to predict the probability of fraud. Workflow ${W_2}$ was constructed based on TPC-DS query 40. For all items with a price between 0.99 and 1.49, this workflow computed the item id and location of the warehouse the item was delivered from in a 60-day period. 
Workflow ${W_3}$ was constructed based on TPC-DS query 71. It produced the brands managed by a given manager that sold their products across three sales channels at either breakfast or dinner time for a given month.
All the join operators in these workflows were one-to-one operators because they join a primary key with a foreign key. We only considered the pipelined sub-DAG of each dataflow. For example, if a hash join has a build phase and a probe phase, we only consider the pipelined probe phase. In Figure~\ref{fig:exp-workflows}, we highlighted all the pipelined edges considered in the experiment in red.

On top of ${W_1}$, workflow ${W_4}$ included an additional merchant-based inference operator for users who made a large amount of payments. For each merchant, the inference operator saved 50 most recent payments, and used a similar LSTM auto-encoder to do inference. A payment record was processed by both inference operators. If one of them produced a probability greater than a threshold, the payment record was flagged as fraud. Workflow ${W_5}$ replicates the payment tuples to both the user-based inference operator and the merchant-based operator. After each operator makes fraud predictions, a {\sf Self-Join} operator is used to combine the replicated tuples into a single one.

\begin{table}[htbp]
\centering
\resizebox{\linewidth}{!}{
\begin{tabular}{|c|l|r|r|}
\hline
Dataset & Table                 & Attribute \# & Tuple \#\\ \hline
1 & Credit card payment           &12           &24M      \\ \hline
2 & Credit card payment aggregated per user  & 2           & 20K\\ \hline
\multirow{3}{*}{3} & Catalog sales &34 & 144M\\ \cline{2-4}
                  & Store sales & 23 &288M \\ \cline{2-4}
                  & Web sales & 34 &71M \\ \hline
\end{tabular}
}
\caption{Datasets used in the experiments.}
\label{table:exp-datasets}
\end{table}

\begin{figure}[hbtp]
	\includegraphics[width=\columnwidth]{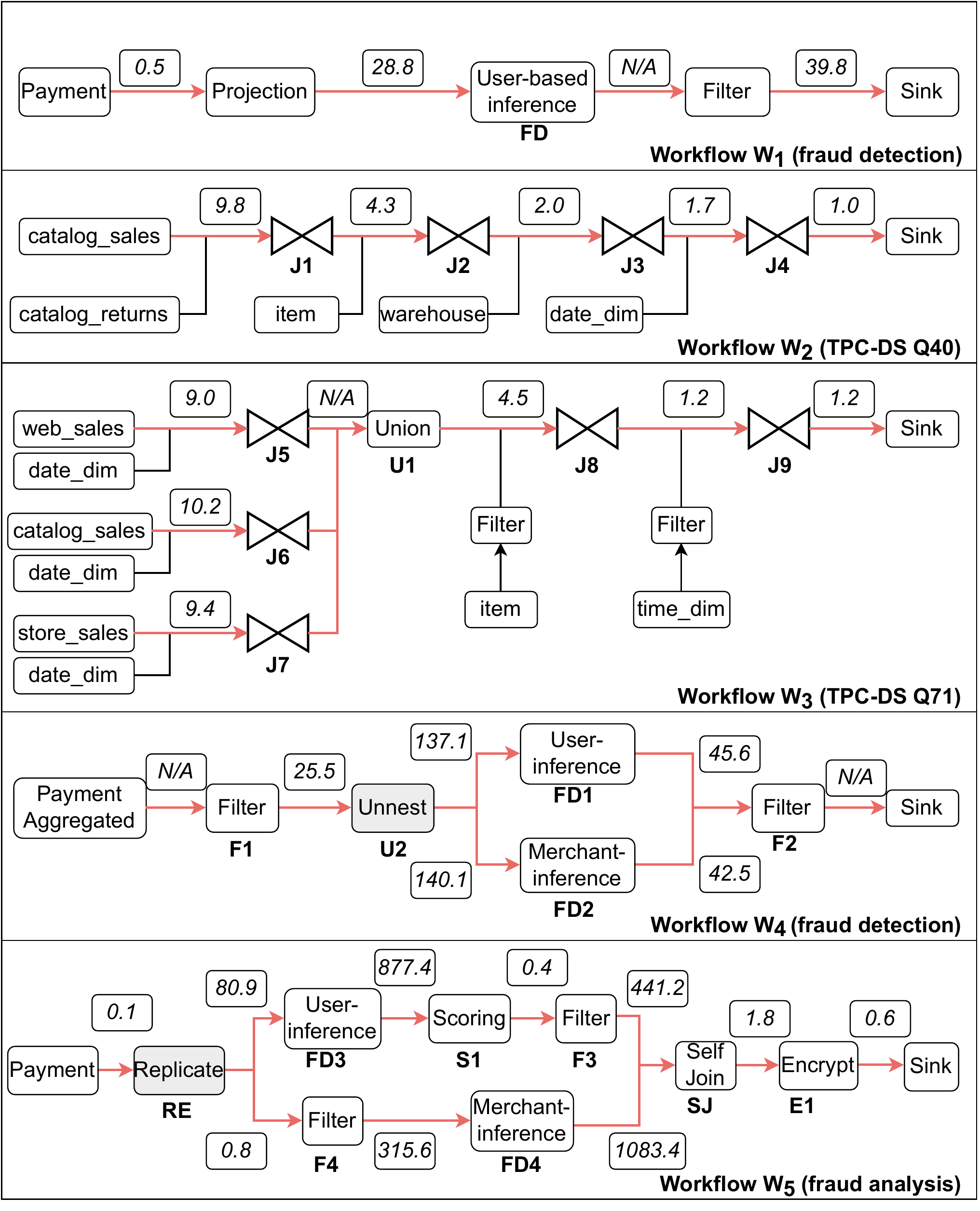} 
	\caption{\label{fig:exp-workflows}
	\textbf{Workflows used in the experiments. Pipelined edges are highlighted in red.} }
\end{figure}
 
\boldstart{Reconfigurations.}
For workflow $W_1$, we performed configurations with one operator.  For workflows $W_2$, $W_3$, and $W_4$, we performed reconfigurations with multiple operators. 
The methods of choosing reconfiguration operators will be described in each experiment.

\boldstart{Schedulers.}  We implemented two epoch-based schedulers. The first one performed a reconfiguration with a savepoint, which was natively supported by Flink (described in Section~\ref{sec:epoch-reconfiguration}). The second one was the scheduler of Chi~\cite{journals/pvldb/MaiZPXSVCKMKDR18} (described in Section~\ref{sec:epoch-reconfiguration}). As Chi was not open source, we implemented this scheduler on top of Flink, and used Flink's aligned checkpoint barriers as epoch markers. The first scheduler always stopped and restarted the execution {\em after} the propagation of checkpoint barriers to apply reconfiguration. The second scheduler applied reconfiguration {\em during} the barrier propagation, and did not require an additional stop-and-restart of the system. As a consequence, the second scheduler always had a shorter reconfiguration delay than the first, as verified in our experiments. Therefore, between these two schedulers, we only report the results of the second, denoted as ``\epoch scheduler.'' 

For fair-comparison purposes, we implemented \sysname also on top of Flink. In the implementation, FCM's between the controller and a specific worker of an operator were sent in special network channels (available in Flink). For each MCS $C$ computed in \sysname for a reconfiguration, the controller sent FCM's to the workers of $C$'s head operators. These workers pushed checkpoint barriers to the workers of their downstream operators in $C$. To let every operator know which downstream operators were in $C$, the checkpoint barrier also included $C$ and the reconfiguration operators in $C$. Every reconfiguration operator in $C$ applied the reconfiguration after receiving checkpoint barriers from all its upstream operators in $C$. The reconfiguration for this MCS $C$ completed after all $C$'s reconfiguration operators applied the reconfiguration. 

\boldstart{System environment.} 
All the experiments were conducted on the Google Cloud Platform (GCP). The execution was on a GCP dataproc cluster with 1 coordinator machine and 10 worker machines. All the machines were of type n1-highmem-4 with Ubuntu 18.04. The job controller of Flink ran on the coordinator. The coordinator machine had a 2TB HDD, while each worker machine had a 250GB HDD. To separate computation and storage, we stored the datasets in an HDFS file system on another cluster with 6 e2-highmem-4 machines, each with 4 vCPU’s, 32 GB memory, and a 500GB HDD. For all the schedulers, we used Flink release 1.13 and Java 8.

\subsection{Choke Point Analysis of Workflows}
In the execution there were various choke points in the workflow where the reconfiguration delay between two operators was very high.
We analyzed these choke points in the experiment workflows by computing the average reconfiguration delay between two operators using the epoch scheduler and showed the numbers on top of each edge in Figure~\ref{fig:exp-workflows}. The numbers represented the delay from the time when the upstream operator applied the reconfiguration and sent checkpoint barriers to the time when the downstream operator aligned all the checkpoint barriers and applied the reconfiguration. Some edges are marked as {\sf N/A} because the two connected operators were fused to a single operator chain in Flink. Edges marked with a number perform re-partition operations, thus the two connected operators are not chained.

We had the following observations. 1) Expensive operators usually created choke points in the workflow. For example, in $W4$, both inference operators applied the reconfiguration from $U1$ after the checkpoint barriers were sent out for around 140s. The inference operators accumulated input tuples in their input data channel, which blocked the checkpoint barrier to be processed. 2) Stragglers also created choke points. For example, in $W5$, there was a delay of 877.4s between $FD3$ and $S1$, because one of the $FD3$ workers was a straggler. Recall that due to the epoch alignment step, $S1$ had to receive all the checkpoint barriers before applying the reconfiguration. $S1$ was blocked when waiting for the straggler $FD3$ worker to finish.
3) If operators had similar costs, choke points depended on the amount of data in each operator's input data channel. For example, in both $W2$ and $W3$, the first several joins had larger delays of reconfiguration. This is because the data was filtered by every join, and the joins near the sink received less data so they had a lower reconfiguration delay.

\subsection{Benefits of Short Reconfiguration Delay: Reducing End-to-end Tuple Latency}

A main advantage of \sysname was its short reconfiguration delay compared to epoch-based schedulers. To show the benefits of this advantage, we considered a scenario for $W_1$ as shown in Figure~\ref{fig:throughput-compare}, where the developer needed to hot-replace the model in the user-based inference operator $FD$ during the execution to deal with a sudden surge of input data. In $W_1$, we set the number of workers for operators (except for the source and sink) to 40 to utilize all the cores in the cluster. The time for $FD$ to process a tuple was about 25ms, and the maximum throughput of this operator was around $1,600$ tuple/s. We used a single worker of the source operator and another single worker of the sink operator on the same machine so that they can use the same clock. The source operator started with an initial ingestion rate of $1,000$ tuples/s. After 100 seconds, we increased the ingestion rate to $2,000$ tuples/s. The developer saw an increasing trend of the end-to-end tuple latency. He requested a reconfiguration to replace the original LSTM auto-encoder model in FD with another LSTM auto-encoder with fewer parameters at $t = 120s$ to speed up the processing. The developer continuously monitored the end-to-end tuple latency. After another $100$ seconds, we further increased the rate to $9,000$ tuples/s. The developer decided to further decrease the cost of $FD$ to reduce the latency. At $t = 220s$, he requested another reconfiguration to replace the LSTM auto-encoder model with a simple decision-tree model.

\begin{figure}[htbp]
	\includegraphics[width=\columnwidth]{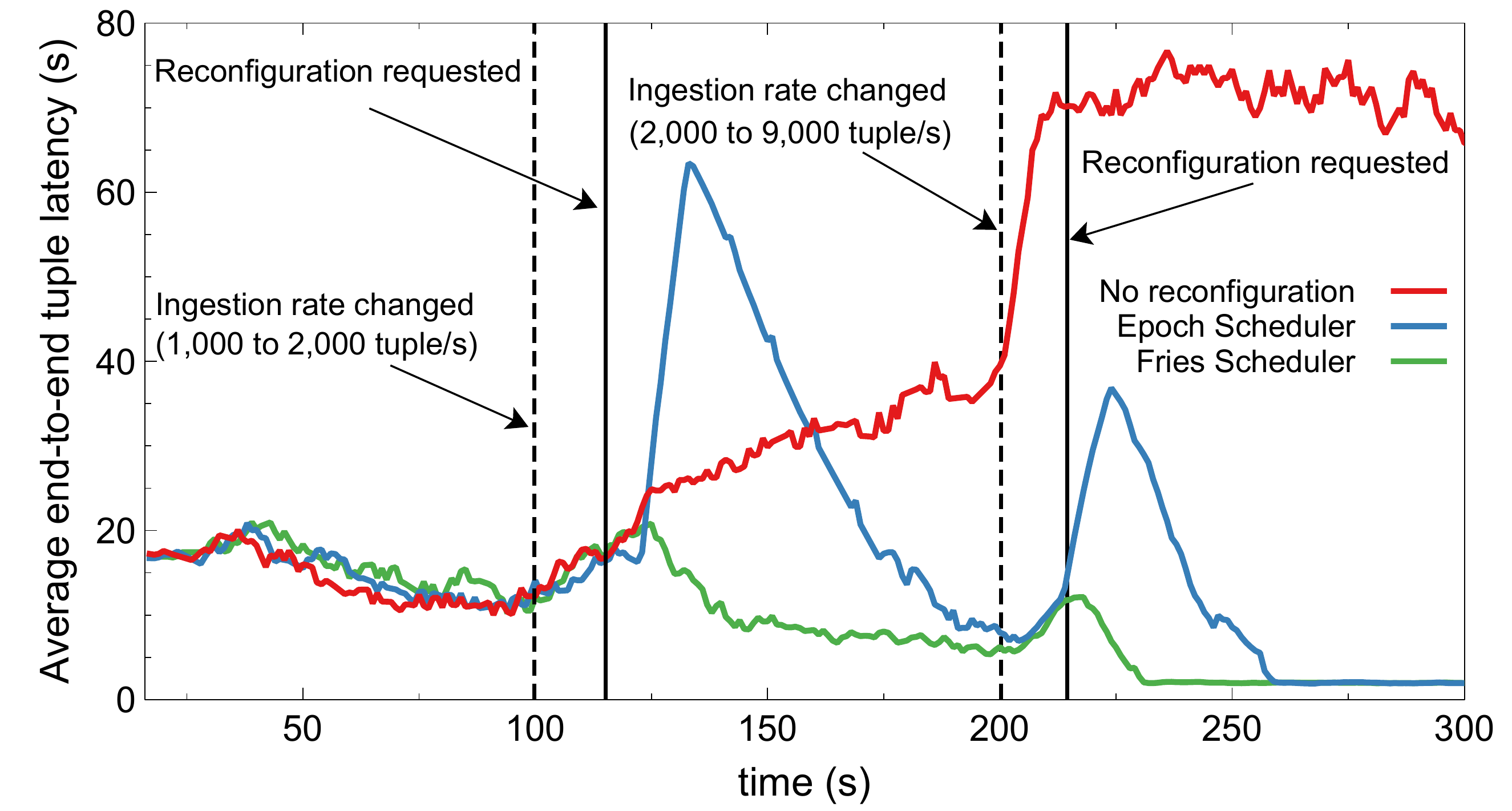} 
	\caption{\label{fig:throughput-compare}
	\textbf{Effect of mitigating surges of data-ingestion rate by different schedulers ($W_1$ on dataset 1).}}
\end{figure}

To compute the end-to-end latency of each input tuple, we attached a timestamp at the moment when the source operator generated this tuple. When the tuple was received by the sink operator, the latency was computed as the difference between the current time and the attached timestamp. Figure~\ref{fig:throughput-compare} shows the average end-to-end latency of output tuples received for every 10-second sliding window. (1) For the case without reconfiguration, after 100 seconds, the end-to-end latency began to increase because $FD$ was not fast enough to process all the incoming tuples. Many tuples were buffered in the network channel. The latency increased continuously and stabilized at around 70s when backpressure from $FD$ was propagated to the source operator to slow down the data ingestion rate.
(2) For the case of using the \epoch scheduler, the latency rapidly increased to above 60 seconds until around $t = 135s$ due to the surge. The main reason for the increase in the end-to-end latency was the blocking in the epoch alignment step. 
Since the sink operator had only one worker, it needed to wait until all 40 upstream $FD$ workers completely processed all tuples in the old epoch before it could process any tuples in the new epoch. Note that the delay was determined by the slowest $FD$ worker. We observed that the average delay of all workers was 30 seconds. However, there were two straggler workers that took 58 seconds and 69 seconds to finish processing the old epoch, respectively. The two straggler workers suffered from data skew. On average, each worker processed 35,000 tuples in the old epoch. However, the slowest worker processed 62,000 tuples.

(3) For the case of using the \sysname scheduler, the latency immediately decreased after $t = 120s$, indicating that $FD$ applied the reconfiguration and was able to quickly process the buffered tuples. Compared to \epoch, \sysname required less time to mitigate the surge. In this reconfiguration, the MCS component contained operator $FD$ only. Therefore, FCMs are directly sent to all $FD$ workers and no epoch markers were propagated to any downstream operators. This eliminated the aforementioned delay caused by the epoch alignment step.

\subsection{Benefits of Short Reconfiguration Delay: Reducing Wasted Computing Resources}

Another benefit of a short reconfiguration delay can be illustrated in the following scenario.
When processing data with unexpected content or formats, the workflow can produce invalid output tuples to be collected and reprocessed. A large number of invalid output tuples not only wastes computation in the current execution, but also requires more resources in the future. A short reconfiguration delay can effectively reduce the amount of wasted computing resources. To illustrate the benefit, we considered workflow $W_1$, and attached a version number $V1$ to every source tuple. The user-based inference operator $FD$ had another version number $V2$ for its processing logic. For every input tuple, $FD$ expected $V1$ of the tuple to match with $V2$; otherwise, the operator produced an invalid output tuple. For every 50 seconds, we increased $V1$, and the developer realized the version changed 20 seconds after that. Then, he requested a reconfiguration of $FD$ to also increase its version $V2$. We measured the number of invalid output tuples produced by the workflow over time under different reconfiguration schedulers as a metric of wasted computing resources.

\begin{figure}[htbp]
	\includegraphics[width=\columnwidth]{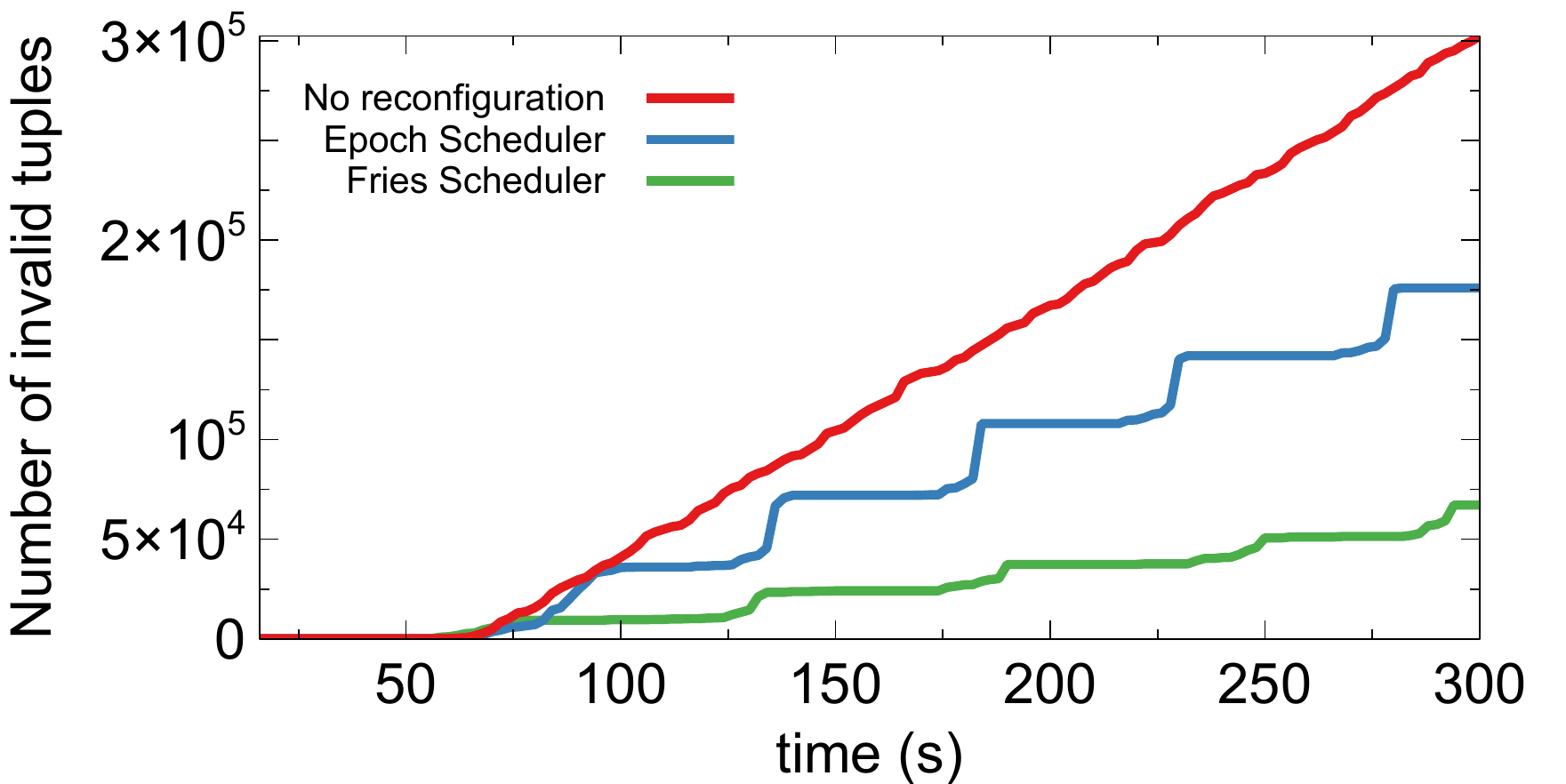} 
	\caption{\label{fig:wrong-tuples}
	\textbf{Effect of schedulers on the number of invalid output tuples ($W_1$ on dataset 1).}}
\end{figure}

Figure~\ref{fig:wrong-tuples} showed the result. (1) For the case without reconfiguration, all the output tuples after the first input version update were invalid. Thus the number of invalid tuples increased continuously. After 300 seconds, the number of invalid tuples reached 302K. (2) For the case of using the \epoch scheduler, operator $FD$ had to process all the tuples prior to the epoch marker before applying the reconfiguration. So all the tuples after the input version update and prior to the marker were invalid. The workflow produced many invalid tuples after each input version update, resulting in reprocessing of 176K tuples after 300 seconds.(3) For the case of using the \sysname scheduler, the operator quickly applied the reconfiguration, so fewer invalid tuples were generated for each input version change. At the end, only 67K tuples needed to be reprocessed.

\subsection{Effect of Data Ingestion Rates on Reconfiguration Delays}

Next we evaluated the effect of different factors on the delay.  We first considered data-ingestion rate. For workflow $W_1$, we gradually increased the rate at the source operator from $500$ tuples/s to $2,500$ tuples/s. For each configuration, after the execution of $120$ seconds, we applied a dummy reconfiguration on operator $FD$ and measured the delay under the two schedulers. As shown in Figure~\ref{fig:epoch-delay-input-frequency} (with a log scale for the $y$-axis), when the ingestion rate increased, the delay of the \epoch scheduler also increased due to the larger amount of in-flight tuples prior to the epoch marker. Since the \sysname scheduler sent FCM's directly to $FD$, its delay grew much slower than the \epoch scheduler. 

\begin{figure}[htbp!]
	\includegraphics[width=0.8\linewidth]{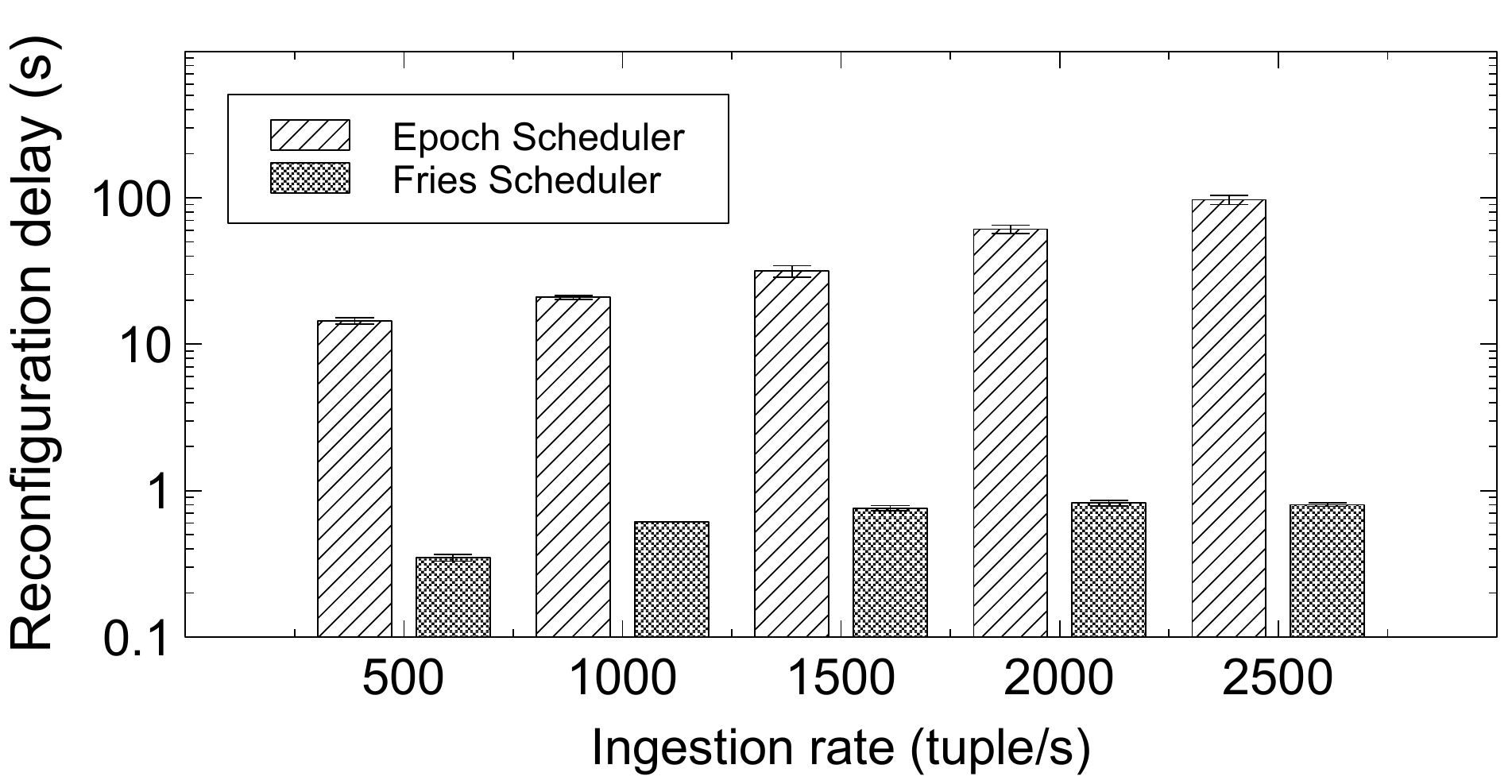} 
	\caption{\label{fig:epoch-delay-input-frequency}
	\textbf{Effect of data ingestion rate on the reconfiguration delay (with 95\% confidence intervals) ($W_1$ on dataset 1).}}
\end{figure}

\subsection{Effect of Operator Costs on Delays} 

To evaluate the effect of operator cost on the reconfiguration delay, for workflow $W_1$, we gradually increased the cost of the user-based inference operator $FD$ to process each input tuple. The $FD$ operator maintained a bounded queue of recent payment amounts of each user. When an input tuple was received by $FD$, the operator passed the payment amounts in the queue to its ML model. In different runs of experiments, we gradually increased the size of this queue from 10 to 50 so that the operator took more time to process each input tuple. Again, for each configuration, after the execution ran for $120$ seconds, we applied a dummy reconfiguration on $FD$ and measured the delay under the two schedulers. As shown in Figure~\ref{fig:epoch-delay-operator-cost}, when the $FD$'s cost increased, the delay of the \epoch scheduler also increased because each in-flight data tuple prior to the epoch marker took more time to be processed. On the other hand, the delay of the \sysname scheduler grew much slower than the \epoch scheduler. 

\begin{figure}[htbp]
	\includegraphics[width=0.8\linewidth]{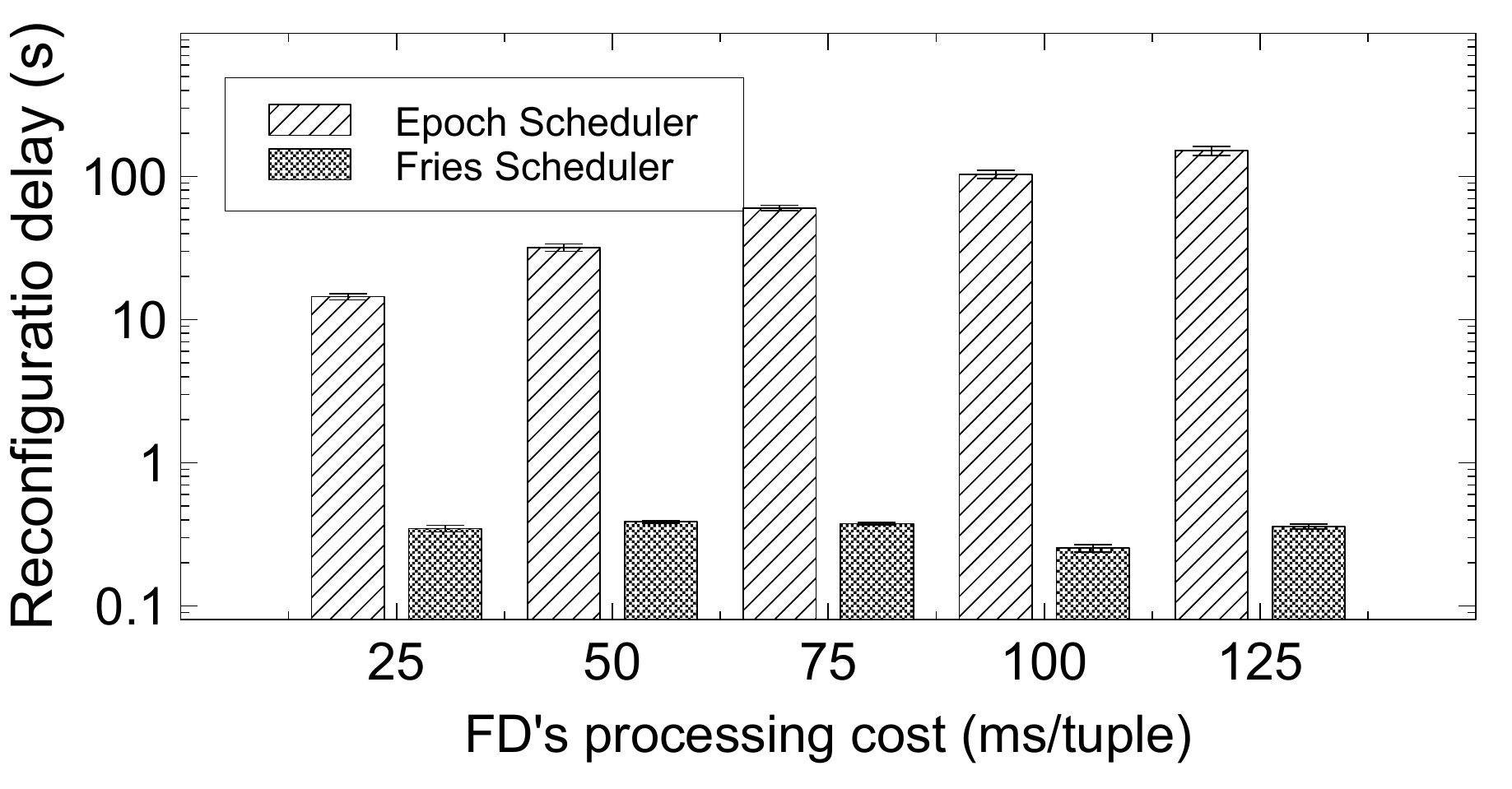} 
	\caption{\label{fig:epoch-delay-operator-cost}
	\textbf{Effect of operator cost on the reconfiguration delay with 95\% confidence intervals ($W_1$ on dataset 1).}}
\end{figure}

\subsection{Effect of Reconfigurations on Delays}

We wanted to evaluate the effect of reconfigurations on the delay under the two schedulers. We varied the number of reconfiguration operators in both workflows $W_2$ and $W_3$. For both workflows, we used 40 workers for each operator. For every 10 seconds, we requested a reconfiguration and measured the average reconfiguration delay. The results are shown in Table~\ref{table:reconf-op-and-mcs}. For each reconfiguration, we show its operators, the MCS components generated by the \sysname scheduler, the length of a longest path of each component, the delay of using the \sysname scheduler, and the delay of using the \epoch scheduler. We reported the path length because it affected the delay in the \sysname scheduler.

We have the following observations from the results. (1) The delay of the \sysname scheduler was always significantly lower than the delay of the \epoch scheduler. For example, for the $W_2$ configuration including $J1$ and $J4$, the delay of the \sysname scheduler was $1,702$ms, compared to $12,361$ms of the \epoch scheduler.
(2) For reconfigurations with multiple operators, if each operator formed a component in MCS, the delay of \sysname was very low. For example, for the $W_3$ reconfiguration of $J5$ and $J6$, each of them formed their own component. The \sysname scheduler had a delay of 127ms, comparable to 87ms in the case with $J5$ as the only reconfiguration operator. This low delay was because the \sysname scheduler sent FCM's separately to both operators and their reconfiguration happened in parallel.
(3) When the length of the longest path in a component increased, as expected, the delay of the \sysname scheduler also increased. For example, for the reconfiguration of $J1$ and $J3$, the longest path in their MCS had a length of 2, and the delay was $1,664$ms. For the reconfiguration of $J1$ and $J4$, the longest path in their MCS had a length of 3, and the delay increased to $1,702$ms.

\begin{table}[htbp]
\centering
\resizebox{\linewidth}{!}{
\begin{tabular}{|m{0.14\linewidth}|m{0.24\linewidth}|m{0.2\linewidth}|>{\raggedleft\arraybackslash}m{0.12\linewidth}|>{\raggedleft\arraybackslash}m{0.15\linewidth}|>{\raggedleft\arraybackslash}m{0.15\linewidth}|}
\hline
Workflow& Reconfiguration operators & MCS components & Longest path length & \sysname Scheduler delay (ms) & \epoch Scheduler delay (ms) \\ \hline
\centering{\multirow{4}{*}{$W_2$}}& J1       & \{\textbf{J1}\}              &0   &46     &11,432       \\ \cline{2-6}
& J2   & \{\textbf{J2}\}          &0   &44    &11,709         \\ \cline{2-6}
& J1, J3   & \{\textbf{J1}, J2, J3\}      &2   &1,664   &12,339    \\ \cline{2-6}
&J1, J4   & \{\textbf{J1}, J2, J3, J4\}  &3   &1,702   &12,361          \\ \cline{2-6} 
&J3, J4   & \{\textbf{J3}, J4\}  &1   &387   &13,767  \\ \hline
\centering{\multirow{5}{*}{$W_3$}}&J5                 & \{\textbf{J5}\}    &0  &87 &4,127      \\ \cline{2-6}
&\multirow{2}{*}{J5, J6}             & \{\textbf{J5}\} &0  &\multirow{2}{*}{127}  &\multirow{2}{*}{8,352}           \\
& &\{\textbf{J6}\} &0 & & \\ \cline{2-6}
&J5, J6, J7, J8     & \{\textbf{J5}, \textbf{J6}, \textbf{J7}, U1, J8\} &3 &447   &19,608          \\ \cline{2-6}
&J5, J6, J7, J9    & \{\textbf{J5}, \textbf{J6}, \textbf{J7}, U1, J8, J9\} &4  &526 &19,717         \\ \cline{2-6}
&J7, J8, J9    & \{\textbf{J7}, U1, J8, J9\} &3 &1,340   &20,532          \\ \hline
\end{tabular}
}
\caption{Reconfiguration operators, corresponding MCS, and reconfiguration delay in $W_2$ and $W_3$ on dataset 3. Head operators in each component are highlighted in bold.}
\label{table:reconf-op-and-mcs}
\end{table}

\subsection{Reconfiguration Delays in Workflows with One-to-many Operators}

We used workflow $W_4$ to evaluate the effect of reconfiguration operators on the reconfiguration delay in workflows with a one-to-many operator $U2$. This operator split all the payments of a user and sent them to both $FD1$ and $FD2$. Table~\ref{table:one-to-many-exp} shows the results. We have the following observations. 
(1) The delay of the \sysname scheduler was still always lower than the \epoch scheduler. 
(2) The reconfiguration of $FD1$ took a long time (47,892ms) in \sysname because $FD1$ was not the head operator of its component. The epoch markers had to go through the data channels of $FD1$ (from multiple workers). Since $FD1$ processed tuples slowly, many of its input tuples were buffered in its data channels, which delayed the propagation of the epoch markers.
(3) The reconfiguration of $F2$ took a long delay ($221,353$ms) in \sysname because its generated MCS contained every operator on the path from $U2$ and $F2$ with the  one-to-many $U2$ operator and both $FD1$ and $FD2$ were slow.

\begin{table}[htbp]
\centering
\resizebox{\linewidth}{!}{
\begin{tabular}{|m{0.23\linewidth}|m{0.22\linewidth}|>{\raggedleft\arraybackslash}m{0.15\linewidth}|>{\raggedleft\arraybackslash}m{0.2\linewidth}|>{\raggedleft\arraybackslash}m{0.2\linewidth}|}
\hline
Reconfiguration operators & MCS components & Longest path length & \sysname Scheduler delay (ms) & \epoch Scheduler delay (ms) \\ \hline
F1, U2               & \{\textbf{F1}, U2\} &1   & 69   &151  \\ \hline
FD1                  & \{\textbf{U2}, FD1\} &1  & 47,892  &  131,103   \\ \hline
F2               & \{\textbf{U2}, FD1, FD2, F2\} &5 &221,353 &   236,153        \\ \hline
\end{tabular}
}
\caption{Reconfiguration operators, corresponding MCS, and reconfiguration delay in $W_4$ on dataset 2. Head operators in each component are highlighted in bold.}
\label{table:one-to-many-exp}
\end{table}

\subsection{Effect of MCS Pruning on Delays in Workflows with One-to-many Operators}

We used workflow $W_5$ to evaluate the effect of the MCS pruning method proposed in Section~\ref{subsec:one-to-many-pruning} on the reconfiguration delay in workflows with a one-to-many {\sf Replicate} operator and a {\sf Self Join} operator. For each reconfiguration, we compare the \sysname scheduler with the pruning step turned on and turned off.
Table~\ref{table:one-to-many-pruning-exp} shows the results. We have the following observations. 
(1) In general, when pruning is possible, the size of MCS components was reduced and the delay with pruning was significantly lower than the delay without pruning. For example, the reconfiguration of operator $FD4$ and the reconfiguration of operator $F3$ benefited from the edge-wise one-to-one pruning rule.
(2) In the case of reconfiguring both $FD3$ and $FD4$, the pruning rules could not prune the one-to-many {\sf Replicate} operator. Therefore the delays were similar.
(3) The reconfiguration of operator $E1$ benefited from the uniqueness pruning rule. This reconfiguration had the largest benefit in delay because the number of edges in the MCS reduced from eight to zero, which greatly reduced the synchronization time.

\begin{table}
\centering
\resizebox{\linewidth}{!}{%
\begin{tabular}{|>{\hspace{0pt}}m{0.15\linewidth}|>{\hspace{0pt}}m{0.28\linewidth}|>{\hspace{0pt}}m{0.28\linewidth}|>{\hspace{0pt}}m{0.16\linewidth}|>{\hspace{0pt}}m{0.20\linewidth}|} \hline
Reconfiguration \par{}operators & MCS \par{} with pruning & MCS\par{}without pruning & Fries with pruning delay (ms) & Fries without pruning delay (ms) \\ \hline
FD4 & \{\textbf{FD4}\} & \{\textbf{RE}, F4, FD4\} & 158 & 450,149 \\ \hline
F3 & \{\textbf{F3}\} & \{\textbf{RE}, FD3, S1, F3\} & 94 & 383,781 \\ \hline
F4 & \{\textbf{F4}\} & \{\textbf{RE}, F4\} & 10 & 446 \\ \hline
FD3, FD4 & \{\textbf{RE}, FD3, F4, FD4\} & \{\textbf{RE}, FD3, F4, FD4\} & 661,892 & 663,460 \\ \hline
E1 & \{\textbf{E1}\} & \{\textbf{RE}, FD3, S1, F3,\par{}F4, FD4, SJ, E1\} & 85 & 1,122,686 \\ \hline
\end{tabular}
}
\caption{The effect of MCS pruning on delays in $W_5$.}
\label{table:one-to-many-pruning-exp}
\end{table}

\subsection{Effect of Multiple Workers on Delays}

To evaluate the effect of the worker number per operator on the reconfiguration delay, we considered workflow $W_2$ and increased the worker number per operator from 1 to 40. After the workflow ran for 20 seconds, we requested a dummy reconfiguration of $J1$ and $J4$. We measured the reconfiguration delay of the two schedulers.

\begin{figure}[htbp]
	\includegraphics[width=0.9\columnwidth]{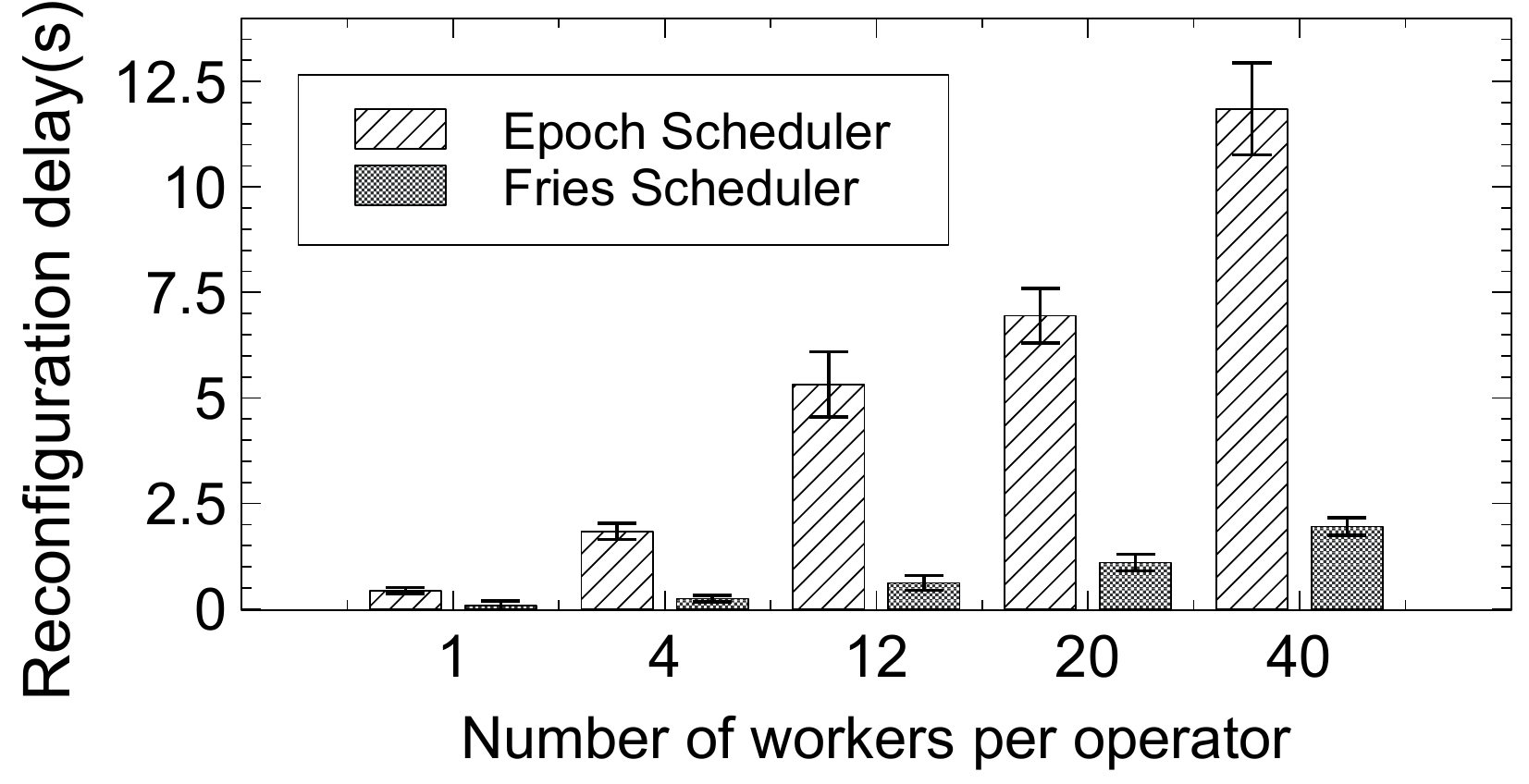} 
	\caption{\label{fig:parallelism-delay}
	\textbf{Effect of worker number on reconfiguration delay with 95\% confidence intervals ($W_2$ on dataset 3).}}
\end{figure}

As shown in Figure~\ref{fig:parallelism-delay}, as the worker number increased, the delay increased for both schedulers. This was because between each pair of join operators, the data was shuffled and every join worker needed to receive an epoch marker from all its upstream workers. So the number of data channels between the joins was the product of their numbers of workers. When each worker number increased, the number of epoch markers to collect also increased. 
The fact that the delay of \sysname scheduler was again lower than the \epoch scheduler can be explained using Table~\ref{table:parallelism-and-mcs-w2}.  In particular, the \sysname scheduler propagated epoch markers only through the data channels between MCS workers, while the \epoch scheduler had to propagate epoch markers through all the data channels. The table shows that the number of channels between MCS workers was always less than the number of channels between all workers.

\begin{table}[htbp]
\centering
\resizebox{\linewidth}{!}{
\begin{tabular}{|>{\raggedleft\arraybackslash}m{0.2\linewidth}|>{\raggedleft\arraybackslash}m{0.4\linewidth}|>{\raggedleft\arraybackslash}m{0.4\linewidth}|}
\hline
Worker \# per operator  & Total \# of data channels between all workers & Total \# of data channels between MCS workers \\ \hline
1           &  5  & 3       \\ \hline
4           & 68      & 48  \\ \hline
12        & 588     & 432     \\ \hline
20        &  1,620    & 1,200     \\ \hline
40        &  6,440    & 4,800    \\ \hline
\end{tabular}
}
\caption{Effect of number of workers on data channels for reconfiguration of $J1$ and $J4$ ($W_2$ on dataset 3).}
\label{table:parallelism-and-mcs-w2}
\end{table}

\section{Conclusions}

In this paper we studied the problem of runtime configurations in data-intensive workflow systems with a low delay. We showed limitations of existing epoch-based reconfiguration schedulers on the delay.  We developed a new technique called \sysname that uses fast control messages to do reconfigurations.  We formally defined consistency in runtime reconfigurations, and developed a \sysname scheduler with consistency guarantee.  The technique also works for parallel executions and supports fault tolerance. Our extensive experimental evaluation showed the advantages of this technique compared to epoch-based schedulers.

\boldstart{Acknowledgements:} This work was supported by the NSF IIS-2107150 award. We thank Sadeem Alsudais and Yicong Huang for their participation in discussions.

\bibliographystyle{ACM-Reference-Format}
\bibliography{references}

\end{document}